\newcommand{\Id}[1]{\ensuremath{\mathit{#1}}}
\newcommand{\set}[1]{\left\{ #1\right\}}
\newcommand{\realrange}[2]{\left[#1, #2\right]}
\newcommand{\unitrange}[2]{\realrange{0}{1}}
\newcommand{\Oh}[1]{\mathcal{O}\!\left( #1\right)}
\newcommand{\llabel}[1]{\label{\labelprefix:#1}}
\newcommand{\labelprefix}{} 
\newcommand{\labelcommand}{}
\newcommand{\captiontext}{}
\newsavebox{\buchalgorithmparam}
\newcounter{lineNumber}
\newenvironment{buchalgorithmpos}[3]{%
\renewcommand{\labelcommand}{#2}%
\renewcommand{\captiontext}{#3}%
\sbox{\buchalgorithmparam}{\parbox{\textwidth}{#3}}%
\begin{figure}[#1]\begin{center}\begin{code}\setcounter{lineNumber}{1}}{%
\end{code}\end{center}\caption{\llabel{\labelcommand}\captiontext}\end{figure}}
\newenvironment{code}{\noindent\it%
\begin{tabbing}%
\hspace{1.5em}\=\hspace{1.5em}\=\hspace{1.5em}\=\hspace{1.5em}\=\hspace{1.5em}\=%
\hspace{1.5em}\=\hspace{1.5em}\=\hspace{1.5em}\=\hspace{1.5em}\=\hspace{1.5em}\=%
\kill}{\end{tabbing}}
\newcommand{\Is}{\mbox{\rm := }}
\newdimen\endofsize\endofsize=0.5em
\def\endofbeweis{~\quad\hglue\hsize minus\hsize
                 \hbox{\vrule height \endofsize width
\endofsize}\par}
\newcommand{\donotshow}[1]{}
\newcommand{\ignore}[1]{}
\newcommand{\mbegin}{\{\ \ }
\newlength{\mleftindent}
\newlength{\mindent}
\newlength{\mboxwidth}
\newlength{\preprogramskip}
\newlength{\postprogramskip}
\newlength{\mexpwidth}
\newlength{\mexpindent}
\newlength{\proofpostskipamount}
\newlength{\proofpreskipamount}
\par\vspace{\proofpreskipamount}\noindent{\bf Proof:}\hspace{0.5em}}
\par\vspace{\proofpostskipamount}\noindent}
\par\vspace{0.5ex}\noindent{\bf Proof #1:}\hspace{0.5em}}%
\newcommand{\myurl}[1]{{\footnotesize \url{#1}}}
\begin{document}

\title{Efficient Parallel and External Matching}

\author{Marcel Birn, Vitaly Osipov, Peter Sanders, Christian Schulz, Nodari Sitchinava}
\institute{Karlsruhe Institute of Technology, Karlsruhe, Germany \\
\email{marcelbirn@gmx.de,\{{osipov,sanders,christian.schulz\}@kit.edu,nodari@ira.uka.de}}}


\maketitle

\begin{abstract}
We show that a simple algorithm for computing a matching on a graph runs in a
logarithmic number of phases incurring work linear in the input size. The
algorithm can be adapted to provide efficient algorithms in several models of
computation, such as PRAM, External Memory, MapReduce and distributed memory
models. Our CREW PRAM algorithm is the first $\Oh{\log^2 n}$ time,
linear work algorithm. Our experimental results indicate the algorithm's high speed and
efficiency combined with good solution quality.  
\end{abstract}

\section{Introduction}

A matching $M$ of a graph $G=(V,E)$ is a subset of edges such that no two
elements of $M$ have a common end point. Many applications require the
computation of matchings with certain properties, like being maximal (no edge can
be added to $M$ without violating the matching property), having maximum
cardinality, or having maximum total weight $\sum_{e\in M}w(e)$. Although these
problems can be solved optimally in polynomial time, optimal algorithms are not
fast enough for many applications involving large graphs where we need near
linear time algorithms. For example, the most efficient algorithms for graph
partitioning rely on repeatedly contracting maximal matchings, often trying to
maximize some edge rating function $w$. Refer to \cite{HSS10} for details
and examples. For very large graphs, even linear time is not enough -- we
need a parallel algorithm with near linear work or an algorithm working in the
external memory model \cite{VitShr94}.

Here we consider the following simple \emph{local max} algorithm
\cite{Hoepman04}: Call an edge locally maximal, if its weight is larger than the
weight of any of its incident edges; for unweighted problems, assign unit
weights to the edges. When comparing edges of equal weight, use tie breaking
based on random perturbations of the edge weights.  The algorithm starts with an
empty matching $M$. It repeatedly adds locally maximal edges to $M$ and removes
their incident edges until no edges are left in the graph.  The result is
obviously a maximal matching (every edge is either in $M$ or it has been removed
because it is incident to a matched edge). The algorithm falls into a family of
weighted matching algorithms for which Preis \cite{Preis99} shows that they
compute a $1/2$-approximation of the maximum weight matching problem.  Hoepman
\cite{Hoepman04} derives the local max algorithm as a distributed adaptation of
Preis' idea. Based on this, Manne and Bisseling \cite{ManneB07} devise
sequential and parallel implementations. They prove that the algorithm needs
only a logarithmic number of iterations to compute maximal matchings by noticing
that a maximal matching problem can be translated into a maximal independent set
problem on the \emph{line graph} which can be solved by Luby's algorithm
\cite{Luby86}.  However, this does not yield an algorithm with linear work since
it is not proven that the edge set indeed shrinks geometrically.%
\footnote{Manne and Bisseling show such a shrinking property under an assumption
  that unfortunately does not hold for all graphs.}  Manne and Bisseling also
give a sequential algorithm running in time $\Oh{m\log\Delta}$ where $\Delta$ is
the maximum degree. On a NUMA shared memory machine with 32 processors (SGI Origin 3800) they get
relative speedup $<6$ for a complete graph and relative speedup $\approx 10$ for a more sparse
graph partitioned with Metis. Since this graph still has average degree $\approx 200$
and since the speedups are not impressive this is a somewhat inconclusive result when one is interested in partitioning large sparse graphs on a larger number of processors.

Parallel matching algorithms have been widely studied. There is even a book on
the subject \cite{KarRyt98} but most theoretical results concentrate on
work-inefficient algorithms.
The only linear work parallel algorithm that we are aware of is a randomized
CRCW PRAM algorithm by Israeli and Itai \cite{IsraeliItai86}  which runs in
$\Oh{\log n}$ time and incurs linear work. Their algorithm, which we call IIM,
provably removes a constant fraction of edges in each iteration.

Fagginger Auer and Bisseling~\cite{FagBis12} study an algorithm similar to
\cite{IsraeliItai86} which we call red-blue matching (RBM) here. They implement
RBM on shared memory machines and GPUs.  They prove good shrinking behavior for
random graphs, however, provide no analysis for arbitrary graphs.

{\bf Our contributions.} We give a simple approach to implementing the local
max algorithm that is easy to adapt to many models of computation. We show that
for computing maximal matchings, the algorithm needs only linear work on a
sequential machine and in several models of parallel computation
(Section~\ref{s:parallel}).  Moreover it has low I/O complexity on several
models of memory hierarchies. 

Our CRCW PRAM local max algorithm matches the optimal asymptotic bounds of IIM.
However, our algorithm is simpler (resulting in better constant factors),
removes higher fraction of edges in each iteration (IIM's proof shows less than
5\% per iteration, while we show at least 50\%) and our analysis is a lot
simpler.  
We also provide the first CREW PRAM algorithm which runs in $\Oh{\log^2 n}$
time and linear work.\footnote{While a generic simulation of IIM on the CREW
PRAM model will result in a $\Oh{\log^2 n}$ time algorithm, the simulation
incurs $\Oh{n \log n}$ work due to sorting.}

In Section~\ref{s:experiments} we explain how to implement local max on
practical massively parallel machines such as MPI clusters and GPUs.  Our
experiments indicate that the algorithm yields surprisingly good quality for
the weighted matching problem and runs very efficiently on sequential machines,
clusters with reasonably partitioned input graphs, and on GPUs.  Compared to
RBM, the local max implementations remove more edges in each iteration  and
provide better quality results for the weighted case.  Some of the results
presented here are from the diploma thesis of Marcel Birn \cite{Birn12}.

\section{Parallel Local Max}\label{s:parallel}

Our central observation is:
\begin{lemma}\label{lem:remove}
  Each iteration of the local max algorithm for the unit weight case removes at
  least half of the edges in expectation.
\end{lemma}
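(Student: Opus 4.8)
The plan is to sidestep reasoning about individual edges' survival probabilities, which turn out to be useless for a direct union bound: a single edge can \emph{survive} with probability as large as $2/3$ (take a low-weight edge whose two endpoints each also hang off a high-degree vertex, so that neither endpoint is likely to be matched). Instead I would combine a purely deterministic counting bound with one global expectation computation. First fix the random edge weights; after the random perturbation all weights are distinct and exchangeable, so an edge $e=\{u,v\}$ is \emph{locally maximal} precisely when it is the heaviest of the $\deg u+\deg v-1$ edges sharing an endpoint with it (itself included), which by exchangeability happens with probability exactly $1/(\deg u+\deg v-1)$. Let $M'$ be the set of edges matched in this iteration, i.e.\ the locally maximal edges; no two of these are adjacent, so $M'$ is a matching, and an edge is removed exactly when at least one of its endpoints is covered by $M'$.

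The deterministic core is a double-counting bound on the number $R$ of removed edges. Writing $V(M')$ for the set of matched vertices and using that $M'$ is a matching,
\[
\sum_{\{u,v\}\in M'}(\deg u+\deg v)\;=\;\sum_{w\in V(M')}\deg w\;=\;\sum_{f\in E}\bigl|f\cap V(M')\bigr|.
\]
Each edge $f$ contributes $|f\cap V(M')|\le 2$, and it contributes a positive amount exactly when it is removed, so the last sum is at most $2R$. Hence $R\ge\tfrac12\sum_{\{u,v\}\in M'}(\deg u+\deg v)$ holds pointwise, for every realization of the weights.

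It then remains to take expectations and substitute the matching probability. By linearity of expectation,
\[
\E[R]\;\ge\;\tfrac12\,\E\!\Big[\textstyle\sum_{\{u,v\}\in M'}(\deg u+\deg v)\Big]\;=\;\tfrac12\sum_{\{u,v\}\in E}\Pr[\{u,v\}\in M']\,(\deg u+\deg v)\;=\;\tfrac12\sum_{\{u,v\}\in E}\frac{\deg u+\deg v}{\deg u+\deg v-1}\;\ge\;\frac{m}{2},
\]
the final inequality because $\deg u+\deg v\ge 2$ makes every summand at least $1$. This gives the claimed bound (in fact a strict improvement over $m/2$).

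The step I expect to be the real obstacle is conceptual rather than computational: recognizing that one must not count removed edges one at a time but weight the accounting by vertex degrees, so that the factor-$2$ loss in the double counting is \emph{exactly} recovered by the factor $\frac{\deg u+\deg v}{\deg u+\deg v-1}>1$ arising from the local-maximality probability. The only place to be careful is the direction of the pointwise counting inequality, namely that every removed edge absorbs at most two units of matched degree while every matched edge contributes its full endpoint degrees.
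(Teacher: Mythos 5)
Your proof is correct and is essentially the paper's own argument in different clothing: your count $\sum_{f\in E}\bigl|f\cap V(M')\bigr|$ is exactly the paper's ``marks'' (an endpoint of $f$ at $v$ is marked iff $v$ is matched), and both proofs combine the matching probability $1/(\deg u+\deg v-1)$ with the observation that each matched edge accounts for $\deg u+\deg v\ge \deg u+\deg v-1$ marks while each removed edge absorbs at most two. The only difference is ordering (you state the double-counting inequality pointwise before taking expectations, the paper takes the expectation of the mark count first), which is immaterial.
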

\begin{proof}
  Consider the graph remaining in the currently considered iteration where
  $d(v)$ denotes the degree of a node and $m$ the remaining number of edges.
  Consider the end point at node $v$ of an edge $\set{u,v}$ as \emph{marked} if
  and only if some edge incident to $v$ becomes matched. Note that an edge is
  removed if and only if at least one of its end points becomes marked.  Now
  consider a particular edge $e=\set{u,v}$.  Since any of the $d(u)+d(v)-1$
  edges incident to $u$ and $v$ is equally likely to be locally maximal, $e$
  becomes matched with probability $1/(d(u)+d(v)-1)$.%
  \footnote{For this to be true, the random noise added for tie breaking needs
    to be renewed in every iteration. However, in our experiments this had no
    noticeable effect.}  If $e$ is matched, this event is responsible for setting
  $d(u)+d(v)$ marks, i.e., the expected number of marks caused by an edge is
  $(d(u)+d(v))/(d(u)+d(v)-1)\geq 1$. By linearity of expectation, the total
  expected number of marks is at least $m$.  Since no edge can have more than
  two marks, at least $m/2$ edges have at least one mark and are thus deleted.%
  \footnote{This is a conservative estimate. Indeed, if we make the
    (over)simplified assumption that $m$ marks are assigned randomly and
    independently to $2m$ end points, then only one fourth of the edges survives
    in expectation. Interestingly, this is the amount of reduction we observe in
    practice -- even for the weighted case.}\endofbeweis
\end{proof}

Assume now that each iteration can be implemented to run with work linear in the
number of surviving edges (independent of the number of nodes). Working naively
with the expectations, this gives us a logarithmic number of rounds and a
geometric sum leading to linear total work for computing a maximal
matching. This can be made rigorous by distinguishing \emph{good} rounds with at
least $m/4$ matched edges and bad rounds with less matched edges. By Markov's
inequality, we have a good round with constant probability. This is already
sufficient to show expected linear work and a logarithmic number of expected
rounds. We skip the details since this is a standard proof technique and since
the resulting constant factors are unrealistically conservative. An analogous
calculation for median selection can be found in
\cite[Theorem~5.8]{MehSan08}. One could attempt to show a shrinking factor close
to 1/2 rigorously by showing that large deviations (in the wrong direction) from
the expectation are unlikely (e.g., using Martingale tail bounds).  However this
would still be a factor two away from the more heuristic argument in Footnote~4
and thus we stick to the simple argument.

There are many ways to implement an iteration which of course depend on the
considered model of computation. 

{\bf Sequential Model.} 
For each node $v$ maintain a candidate edge $C[v]$, originally initialized to a
dummy edge with zero weight. In an iteration go through all remaining edges
$e=\set{u,v}$ three times. In the first pass, if $w(e)>w(C[u])$ set $C[u]\Is e$
(add random perturbation to $w(e)$ in case of a tie). If $w(e)>w(C[v])$ set
$C[v]\Is e$.  In the second pass, if $C[u]=C[v]=e$ put $e$ into the matching
$M$.  In the third pass, if $u$ or $v$ is matched, remove $e$ from the graph.
Otherwise, reset the candidate edge of $u$ and $v$ to the dummy edge.  Note
that except for the initialization of $C$ which happens only once before the
first iteration, this algorithm has no component depending on the number of
nodes and thus leads to linear running time in total if Lemma~\ref{lem:remove}
is applied.

{\bf CRCW PRAM Model.} In the most powerful variant of the \emph{Combining CRCW PRAM} that allows
concurrent writes with a maximum reduction for resolving write conflicts, the
sequential algorithm can be parallelized directly running in constant time per
iteration using $m$ processors. 

{\bf MapReduce Model.} The CRCW PRAM result together with the simulation result
of Goodrich et al.~\cite{goodrich:mapreduce} immediately implies that each
iteration of local max can be implemented in $\Oh{\log_M n}$ rounds and $\Oh{m
\log_M n}$ communication complexity in the MapReduce model, where $M$ is the
size of memory of each compute node. Since typical compute nodes in MapReduce have at
least $\Omega(m^{\epsilon})$ memory~\cite{karloff:mapreduce}, for some constant $\epsilon> 0$, each
iteration of local max can be performed in MapReduce in constant rounds and linear
communication complexity.

{\bf External Memory Models.} Using the PRAM emulation techniques for
algorithms with geometrically decreasing input size from
\cite[Theorem~3.2]{CGG+95} the above algorithm can be implemented in the
external memory~\cite{AggVit88} and cache-oblivious~\cite{FLPR99}
models in $\Oh{sort(m)}$ I/O complexity, which seems to be optimal.
\subsection{$\Oh{\log^2 n}$ work-optimal CREW solution}

In this section, we present a $\Oh{\log^2 n}$ CREW PRAM algorithm, which incurs only $\Oh{n+m}$ work.

\newcommand{\e}{{\tt E}}
\newcommand{\V}{{\tt V}}
\newcommand{\A} {{\tt A}}

Converting one representation of a graph into another can require as many as
$\Omega(m \log m)$ operations (e.g. the conversion from an unordered list of
edges into adjacency list representation requires sorting the edges). To
perform matching in $\Oh{n+m}$ work, we define a graph
representation suitable for our algorithm.

Consider an array $\V$ of $n$ elements, where each entry $\V[i] =
\sum_{j<i} deg(v_j)$ ($deg(v_j)$ denotes the degree of
node $v_j$).  An adjacency array representation of a graph is an array $\A$,
where entries $\A[\V[i]]$ through $\A[V[i+1]-1]$ are associated with vertex
$v_i \in G$ and store the edges incident on $v_i$.

We consider the following slightly altered adjacency array representation: the
edges are stored in a separate array $\e$ and the entries $\A[\V[i]]$ through
$\A[\V[i+1]-1]$ store the {\em pointers} to the corresponding edges in $\e$
(see Figure~\ref{fig:graph}).  Thus, we can view each entry of $\A$ as a tuple
$(v,e_k)$, where $v$ is a node in $G$ and $e_k$ is a pointer to a record
$\e[k]$ with information about the edge incident on $v$, such as the two
vertices of the edge, edge weight, or any other auxiliary information.

\begin{figure}[t]
\begin{center}
\includegraphics[scale=.8]{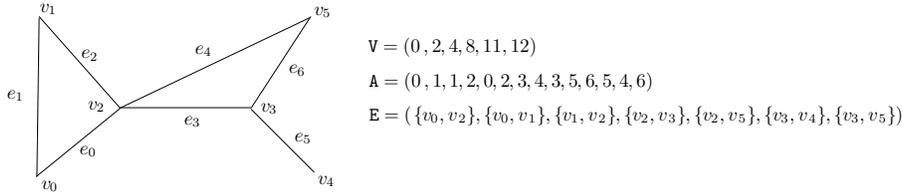}
\end{center}
\caption{Adjacency representation of a graph: Array $\e$ is a collection of
edges. Entries of $\A[\V[i]]$ through $\A[\V[i+1]-1]$ point to edges in $\e$
incident on vertex $v_i$. }
\label{fig:graph}
\end{figure}

Note that  any edge $\e[k] = \{v_i,v_j\}$ contains two corresponding entries in $\A$
pointing to it: $(v_i,e_k)$ and $(v_j,e_k)$.  During our algorithm, a processor
responsible for $(v_i,e_k)$ might need to find and update entry $(v_j,e_k)$ (and vice
versa).  The following lemma describes how to compute for each entry $(v_i,e_k)$
the address of the corresponding entry $(v_j,e_k)$ in $\A$.

\begin{lemma}
\label{lemma:reversal}
For every edge $e_k = \{v_i,v_j\} \in \e$ entries $(v_i, e_k)$ and $(v_j, e_k)$ of $\A$
can compute each other's index in $\A$ in $\Oh{1}$ time and $\Oh{|\A|}$ work in the
CREW PRAM model.
\end{lemma}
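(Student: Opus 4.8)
The plan is to exploit the fact that $|\A| = 2m$, since each edge $\e[k]$ contributes exactly the two entries $(v_i,e_k)$ and $(v_j,e_k)$ to $\A$, and to assign one processor to each entry of $\A$. Every entry $\A[\ell] = (v,e_k)$ already holds a pointer $e_k$ to its edge record, and that record stores both endpoints $\{v_i,v_j\}$ of the edge. Hence each processor can cheaply decide \emph{which} of the two endpoints its own entry represents, simply by comparing its node $v$ against the two stored vertices. This decision is what lets us route the two twin entries to two distinct memory cells and thereby avoid write conflicts, which is the only delicate point on a CREW machine.

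Concretely, I would first allocate an auxiliary table indexed by edges, giving each edge record $\e[k]$ two extra slots, $s_1[k]$ and $s_2[k]$. In a first (write) phase, the processor owning $\A[\ell]=(v,e_k)$ reads the endpoints $\{v_i,v_j\}$ from $\e[k]$, and writes its own index $\ell$ into $s_1[k]$ if $v=v_i$ and into $s_2[k]$ if $v=v_j$. The crucial observation is that, for a simple graph, the two entries pointing to $e_k$ carry distinct node identities, so exactly one of them writes $s_1[k]$ and exactly one writes $s_2[k]$: every cell is written by a unique processor, so this phase is in fact exclusive-write and therefore legal even under the stricter CREW discipline.

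After a barrier, a second (read) phase lets each processor recover its partner: the entry with $v=v_i$ reads $s_2[k]$, and the entry with $v=v_j$ reads $s_1[k]$, obtaining in each case the $\A$-index of the twin entry. Here both entries for $e_k$ read the shared record $\e[k]$ concurrently, which is exactly what CREW permits. Each processor performs only a constant number of reads, comparisons, and writes, so the whole computation runs in $\Oh{1}$ time with $|\A|=2m$ processors, i.e.\ $\Oh{|\A|}$ work, as claimed.

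The one step that needs care — and the only real obstacle — is the argument that the writes in the first phase are conflict-free; this rests entirely on the two endpoints of each edge being distinct, which holds once self-loops are excluded. Since a self-loop can never belong to a matching, such edges can be discarded in a trivial $\Oh{1}$-time, $\Oh{|\A|}$-work preprocessing pass, so the assumption is harmless. Everything else is routine index bookkeeping.
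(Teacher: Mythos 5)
Your proposal is correct and follows essentially the same approach as the paper's proof: both use the edge record as a shared mailbox, let each of the two twin entries deposit its own address there, and rely on distinguishing the two endpoints to make the writes exclusive while permitting the concurrent reads that CREW allows. The only cosmetic differences are that the paper serializes the two directions via the index comparison $i<j$ (handling the other direction symmetrically) where you instead allocate two dedicated slots per edge and do both directions at once, and that you additionally make explicit the harmless no-self-loop assumption.
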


\begin{proof}
For every $\e[k] = \{v_i, v_j\}$ we show how $(v_j,e_k)$ can compute the address
of the corresponding entry $(v_i,e_k)$ in $\V$ for  $i < j$.  The addresses for
the other half of the entries are computed symmetrically.

The algorithm proceeds in two phases.  In the first phase, each entry
$(v_i,e_k)$, writes the address of $(v_i,e_k)$ in $\e[k] = \{v_i, v_j\}$, iff
$i < j$.  In the second phase, each entry $(v_j,e_k)$ reads the
address of $(v_i,e_k)$ from $\e[k] = \{v_i, v_j\}$ iff $j > i$.


If we assign a separate processor to each entry of $\A$, each processor
performs only $\Oh{1}$ steps.  Moreover, there are no concurrent writes because,
at each step only one of the two vertices of the edge $e_k$ writes to $\e[k]$.
Note, we need a concurrent read to $\e[k] = \{v_i, v_j\}$ to determine the
relative order of $i$ and $j$ for $v_i$ and $v_j$.
\endofbeweis \end{proof}

\begin{lemma}
\label{lemma:broadcast}
Using our graph representation, each node $v$ in the graph can apply an
associative operator $\oplus$ to all edges incident on $v$ in $\Oh{\log |\A|}$ time and
$\Oh{|\A|}$ work on the CREW PRAM model.
%
\end{lemma}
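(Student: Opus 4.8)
The plan is to observe that our representation makes the incidence entries of each node \emph{contiguous}: the block $\A[\V[i]]$ through $\A[\V[i+1]-1]$ belongs to $v_i$. Hence ``applying an associative operator $\oplus$ to all edges incident on $v$'' is exactly a \emph{segmented reduction} over $\A$, one segment per node, and I would obtain it from a standard work-optimal parallel prefix (scan) computation.

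First I would construct segment-boundary (head) flags: an array $H$ of length $|\A|$ with $H[\V[i]]=1$ marking the first entry of $v_i$'s block and all other entries $0$. Zero-initializing $H$ costs $\Oh{1}$ time and $\Oh{|\A|}$ work with one processor per entry, and writing the at most $n=\Oh{|\A|}$ head flags — one processor per entry of $\V$, each writing to the distinct position $\V[i]$ — adds only $\Oh{1}$ time and $\Oh{|\A|}$ work with no write conflicts (empty segments of isolated nodes contribute nothing).

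Next I would run a segmented inclusive scan with operator $\oplus$ over $\A$ using $H$. This is the classical construction that runs an ordinary scan on pairs $(\text{value},\text{flag})$ under the lifted associative operator which resets accumulation at each segment head; the result at the last entry of each block is the $\oplus$-aggregate of all edges incident on the corresponding node. Since the work-optimal parallel prefix algorithm runs in $\Oh{\log N}$ time and $\Oh{N}$ work on the EREW PRAM (and hence CREW) for $N$ inputs, instantiating it with $N=|\A|$ yields $\Oh{\log|\A|}$ time and $\Oh{|\A|}$ work. If the aggregate is additionally needed at every incident entry rather than only once per node, a second, reverse segmented scan (a segmented broadcast) distributes each per-segment result to all entries of its segment at the same asymptotic cost.

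The only point that needs care is the interaction with the machine model: the lifted operator must be genuinely associative (it is — this is the standard segmented-scan construction), and the scan must avoid concurrent writes. Because the work-optimal scan is already an EREW algorithm, no step writes concurrently; the only concurrency we invoke elsewhere — reading $\V$ to locate boundaries and reading per-node aggregates — is concurrent \emph{reading}, which the CREW PRAM permits. Thus all stated bounds hold on the CREW PRAM.
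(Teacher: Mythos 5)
Your core construction is the same as the paper's: the per-node incidence lists are contiguous blocks of $\A$, so the reduction is a segmented prefix sum over $\A$, computed in $\Oh{\log|\A|}$ time and $\Oh{|\A|}$ work. The paper simply invokes segmented prefix sums as a known primitive, whereas you spell out the head-flag/lifted-operator construction; that extra detail is fine (one quibble: with isolated vertices several entries of $\V$ coincide, so letting one processor per entry of $\V$ write $H[\V[i]]=1$ can produce concurrent writes to the same cell of $H$; processor $i$ should write only when $\V[i]<\V[i+1]$).

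The substantive difference is the interface with the edge array $\e$, which your proof omits and which is the only place where the CREW restriction actually bites. In this representation the operand values are not stored in $\A$: each entry $(v,e_k)$ is merely a pointer to the record $\e[k]$. The paper's proof therefore has a gather step (each entry reads the value of $\e[k]$ through its pointer -- a concurrent read, since both endpoints' entries read the same record, which CREW permits) and, after the scan, a scatter step in which each entry applies its segment's result back onto the edge record $\e[k]$. That write-back is exactly where two processors, namely the entries $(v_i,e_k)$ and $(v_j,e_k)$, target the same memory cell, and the paper resolves it with the technique of Lemma~\ref{lemma:reversal}, serializing the two writes (the endpoint determined to be ``first'' writes in one round, its partner in the next). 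Your version keeps all results inside $\A$ -- at segment ends, or broadcast across each segment by a reverse scan -- and never writes to $\e$ at all, so it sidesteps rather than solves this conflict. The distinction matters downstream: step 4 of the matching algorithm uses this lemma to spread deletion flags onto the edge records themselves so that step 5 can compact $\e$; to support that with your primitive one would still need to add the conflict-free write-back via Lemma~\ref{lemma:reversal} (or an equivalent single-writer rule). So the reduction itself is correct with the right bounds, but the gather from $\e$ and especially the conflict-free scatter back to $\e$ are missing from your argument.
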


\begin{proof}
First, we read for each entry $(v, e_k) \in \A$ the value from $\e[k]$ on which
to apply the operator. Next,  we run segmented prefix sums with
$\oplus$ operator on these values, where segments are the portions of $\A$
representing the neighbors of a single node.  Finally, each entry of $(v, e_k)
\in \A$ applies its result of segmented prefix sums to the edge $\e[k]$, while using
the technique of Lemma~\ref{lemma:reversal} to avoid write conflicts. Each step
of the algorithm can be implemented in $\Oh{\log |\A|}$ time using $\Oh{|\A|}$
work.
%
\endofbeweis \end{proof}

Now we are ready to describe the solution to the matching problem.
We perform the following in each phase of the local max algorithm.

\begin{enumerate}
\item Each edge $e_k \in \e$ picks a random weight $w_k$.

\item Using Lemma~\ref{lemma:broadcast}, each vertex $v$ identifies the
heaviest edge $e_k$ incident on $v$ by applying   the associative operator {\sc max} to
the edge weights picked in the previous step.
\item Using Lemma~\ref{lemma:reversal}, each entry $(v_i, e_k)$
checks if $\e[k] = \{v_i,v_j\}$ is also the heaviest incident edge on $v_j$. If
so and $i < j$, $v_i$ adds $e_k$ to the matching and sets the deletion flag $f
= 1$ on $\e[k]$.

%

\item Using Lemma~\ref{lemma:broadcast}, each entry $(v_i, e_k)$ spreads the
deletion flag over all edges incident on $v_i$ by applying {\sc max} associative
operator on the deletion flags of incident edges on $v_i$. Thus, if at least
one edge incident on $v_i$ was added to the matching, all edges incident on
$v_i$ will be marked for deletion.
%

\item Now we must prepare the graph representation for the next phase by
removing all entries of $\e$ and $\A$ marked for deletion, compacting $\e$ and
$\A$ and updating the pointers of $\A$ to point to the compacted entries of
$\e$.  To perform the compaction, we compute for each entry $\e[k]$, how
many entries $\e[i]$ and $\A[i], i \le k$ must be deleted. This can be accomplished
using parallel prefix sums on the deletion flags of each entry in $\e$ and
$\A$. Let the result of prefix sums for edge $\e[k]$ be $d_k$ and for entry
$\A[i]$ be $r_i$. Then $k - d_k$ is the new address of the entry $\e[k]$ and $i
- r_i$ is the new address of $\A[i]$ once all edges marked for deletion are
  removed.

\item Each entry $\e[k]$ that is not marked for deletion copies itself to
$\e[k-d_k]$. The corresponding entry $(v, e_k) \in \A$ updates itself to
point to the new entry $\e[k-d_k]$, i.e., $(v, e_k)$ becomes $(v, e_{k-d_k})$,
and copies itself to $\A[i-r_i]$.

\end{enumerate}

The algorithm defines a single phase of the local max algorithm. Each step of
the phase takes at most $\Oh{\log (m+n)} = \Oh{\log n}$ time and
$\Oh{n+m}$ work in the CREW PRAM model.  Over $\Oh{\log m}$
phases, each with geometrically decreasing number of edges, the local max takes
$\Oh{\log^2 n}$ time and $\Oh{n+m}$ work in the CREW PRAM model.

\section{Implementations and Experiments}\label{s:experiments}

We now report experiments focusing on computing approximate maximum weight
matchings.
We consider the following families of inputs, where the first two classes allow comparison with the experiments from \cite{MauSan07}.

{\it Delaunay Instances}
are created by randomly choosing $n=2^x$ points in the unit square and computing
their Delaunay triangulation.
Edge weights are Euclidean distances.

{\it Random graphs} with $n:=2^x$ nodes, $\alpha n$ edges for $\alpha=\{4,16,64\}$, and random
edge weight chosen uniformly from $[0,1]$.

{\it Random geometric} graphs with $2^x$ nodes (rgg$x$). Each vertex is a random point in the unit square and edges connect vertices whose Euclidean distance is below 0.55 $\sqrt{\ln n/n}$. This threshold was chosen in order to ensure that the graph is almost connected.

{\it Florida Sparse Matrix.} Following \cite{FagBis12} we use $126$ symmetric non-0/1
matrices from \cite{UFsparsematrixcollection} using absolute values of their entries as edge
weights, see Appendix for the full list. The number of edges of the resulting graphs $m\in (0.5\ldots 16)\times 10^6$. See Appendix~\ref{app:instances} for a detailed list.

{\it Graph Contraction.} We use the graphs considered by KaFFPa for partitioning graphs from the 10'th DIMACS Implementation Challenge \cite{SanSch13a}. \\

We compare implementations of local max, the red-blue algorithm from
\cite{FagBis12} (RBM) (their implementation), heavy edge matching (HEM)
\cite{KarKum98b}, greedy, and the global path algorithm (GPA) \cite{MauSan07}.
HEM iterates through the nodes (optionally in random order) and matches the heaviest incident edge that is nonadjacent to a previously matched edge.
The greedy algorithm sorts the edges by decreasing weights,
scans them and inserts edges connecting unmatched nodes into the matching. GPA
refines greedy. It greedily inserts edges into a graph $G_2$ with
maximum degree two and no odd cycles. Using dynamic programming on the resulting
paths and even cycles, a maximum weight matching of $G_2$ is computed.

Sequential and shared-memory parallel experiments were performed on an Intel i7 920 $2.67$ GHz quad-core machine with $6$ GB of memory. We used a commodity NVidia Fermi GTX 480 featuring $15$ multiprocessors, each containing $32$ scalar processors, for a total of $480$ CUDA cores on chip. The GPU RAM is $1.5$~GB.
We compiled all implementations using CUDA $4.2$ and Microsoft Visual Studio $2010$ on $64$-bit Windows $7$ Enterprise with maximum optimization level.

\subsection{Sequential Speed and Quality}\label{s:sequential}

\begin{figure}[b]
\begin{center}
\includegraphics[scale=0.54]{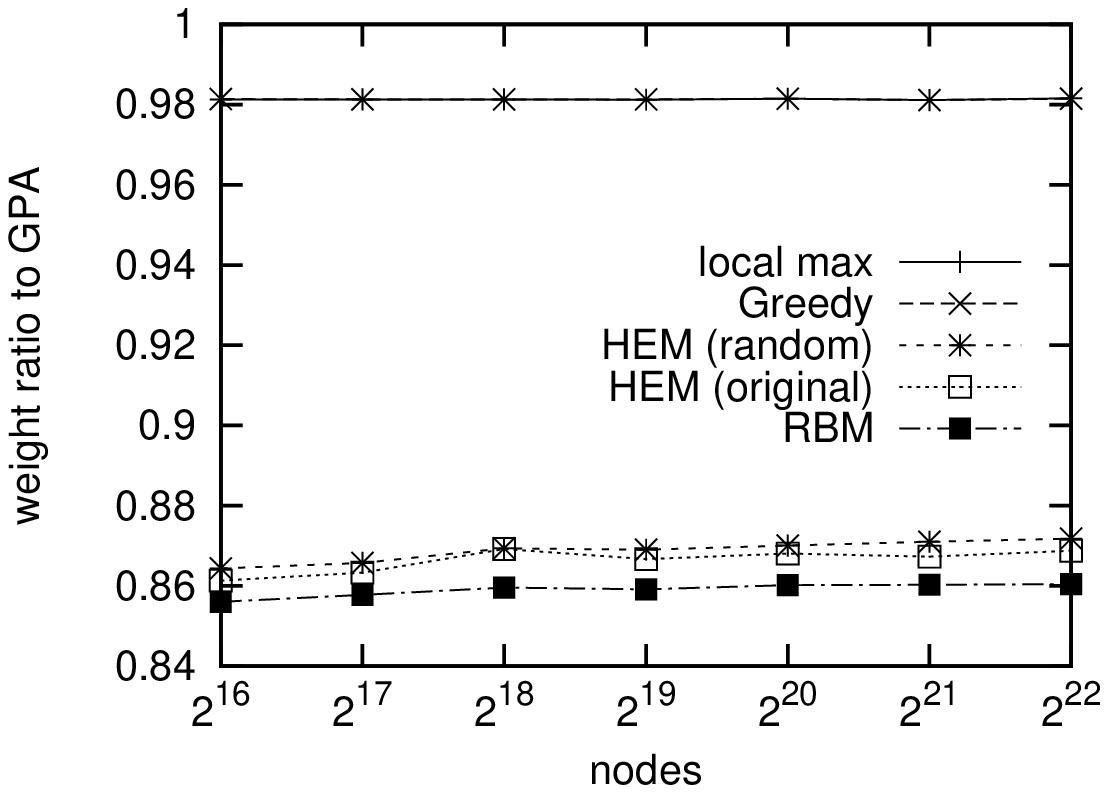}\includegraphics[scale=0.54]{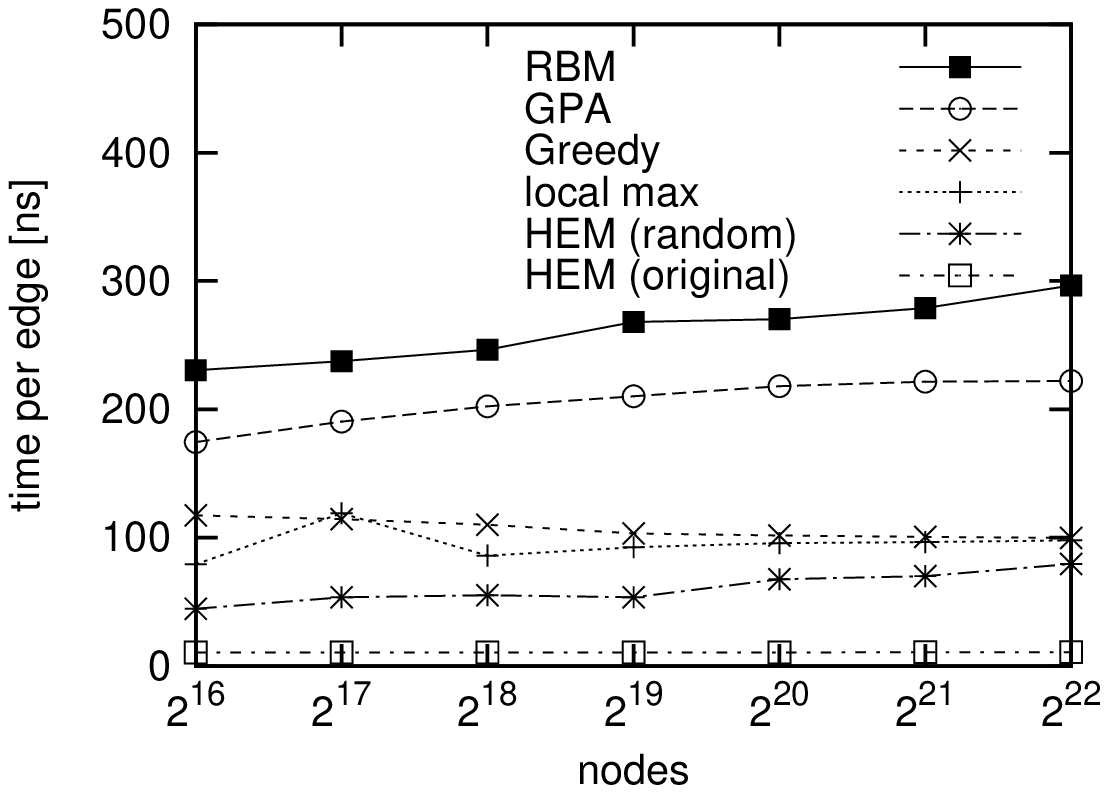}

\caption{\label{fig:delaunayquality}Ratio of the weights computed by GPA and other algorithms for Delaunay instances and running times.}
\end{center}
\end{figure}
\begin{figure}[tb]
\begin{center}
\includegraphics[scale=0.7]{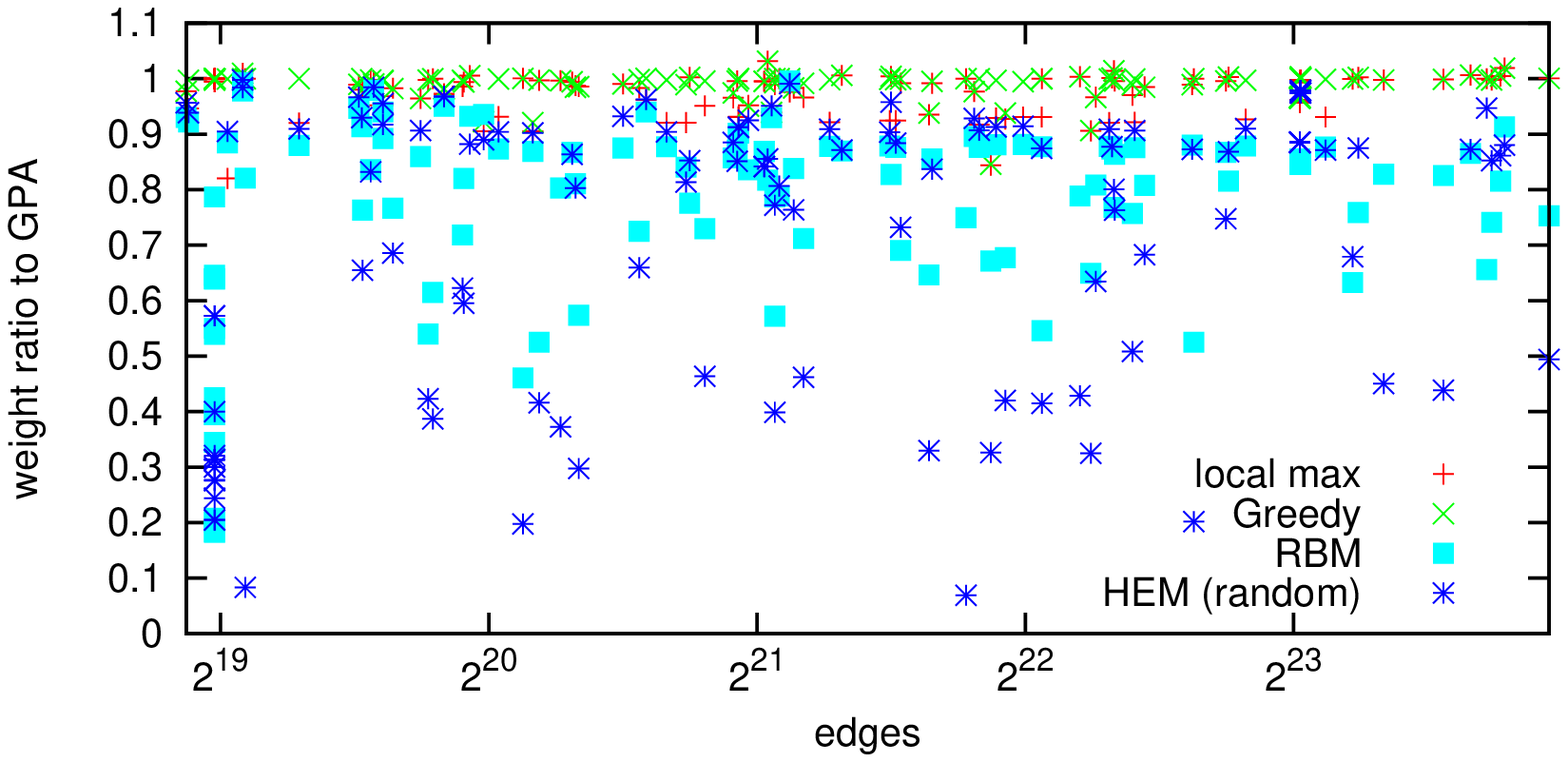}
\includegraphics[scale=0.7]{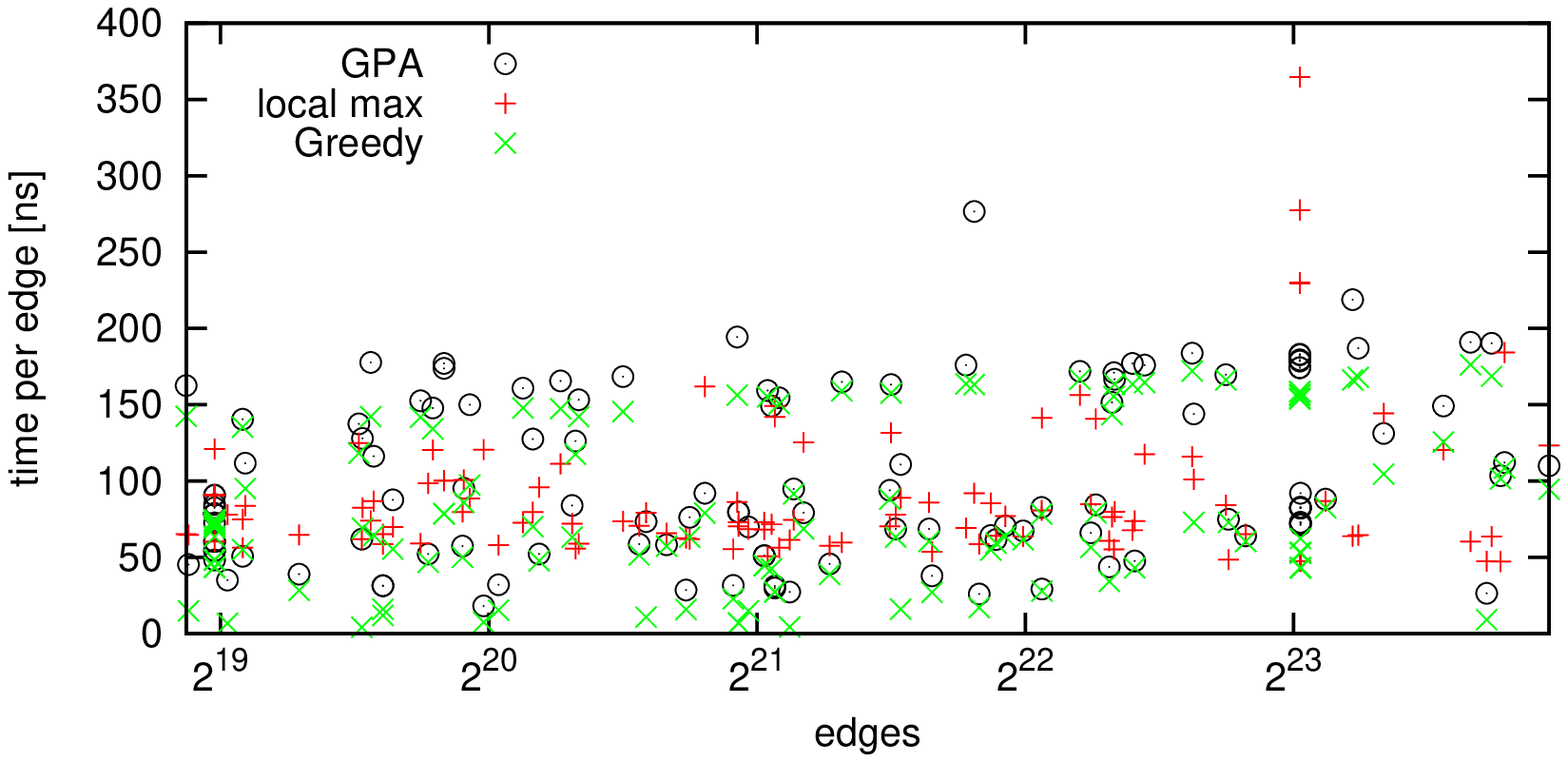}
\caption{\label{fig:floridaquality}Ratio of the weights computed by GPA and other sequential algorithms for sparse matrix instances and running time.}
\end{center}
\end{figure}

We compare solution quality of the algorithms relative to GPA. Via the
experiments in \cite{MauSan07} this also allows some comparison with optimal
solutions which are only a few percent better
there. Figure~\ref{fig:delaunayquality} shows the quality for Delaunay graphs
(where GPA is about 5 \% from optimal \cite{MauSan07}). We see that local max
achieves almost the same quality as greedy which is only about 2 \% worse than
GPA. HEM, possibly the fastest nontrivial sequential algorithm is about 13 \%
away while RBM is 14 \% worse than GPA, i.e., HEM and RBM almost double the gap
to optimality of local max.  Looking at the running times, we see that HEM is the fastest (with a surprisingly
large cost for actually randomizing node orders) followed by local max, greedy,
GPA, and RBM.  From this it looks like HEM, local max, and GPA are the winners
in the sense that none of them is dominated by another algorithm with respect to
both quality and running time. Greedy has similar quality as local max but takes
somewhat longer and is not so easy to parallelize. RBM as a sequential algorithm
is dominated by all other algorithms. Perhaps the most surprising thing is that
RBM is fairly slow. This has to be taken into account when evaluating reported
speedups. We suspect that a more efficient implementation is possible but do not
expect that this changes the overall conclusion.  In Appendix~\ref{app:moredata}
we report similar results for the rgg instances (Figure~\ref{fig:rggquality}) and random graphs (Figures~\ref{fig:random4},~\ref{fig:random16},~\ref{fig:random64}).

Looking at the wide range of instances in the Florida Sparse Matrix collection
leads to similar but more complicated conclusions.
Figure~\ref{fig:floridaquality} shows the solution qualities for greedy, local
max, RBM and HEM relative to GPA. RBM and even more so HEM shows erratic behavior with respect to solution
quality. Greedy and local max are again very close to GPA and even closer to
each other although there is a sizable minority of instances where greedy is
somewhat better than local max. Looking at the corresponding running times one gets a surprisingly diverse picture. HEM which
is again fastest and RBM which is again dominated by local max are not shown.
There are instances where local max is
considerably faster then greedy and vice versa. A possible explanation is that
greedy becomes quite fast when there is only a small number of different edge
weights since then sorting is a quite easy problem.

Experiments on the graph contraction
instances in \cite{Birn12} show local max about 1 \% away from GPA.  For these
instances the average fraction of remaining edges after an iteration is well
below 25 \%. Notable exceptions are the graphs \Id{add20} and \Id{memplus} which
both represent VLSI circuits. Nevertheless, none of the instances considered required more than 10 iterations.

\subsection{Distributed Memory Implementation}\label{s:distributed}
\begin{figure}[b]
\begin{center}
	\subfigure{
	\begin{tikzpicture}[yscale=.8,xscale=.525]
		\begin{scope}[x=1pt,y=1pt]
\definecolor[named]{drawColor}{rgb}{0.00,0.00,0.00}
\definecolor[named]{fillColor}{rgb}{1.00,1.00,1.00}
\fill[color=fillColor,fill opacity=0.00,] (0,0) rectangle (307.15,180.67);
\begin{scope}
\path[clip] ( 59.26, 59.26) rectangle (276.79,166.22);
\definecolor[named]{drawColor}{rgb}{0.00,0.00,0.00}

\draw[color=drawColor,dash pattern=on 1pt off 3pt ,line cap=round,line join=round,fill opacity=0.00,] ( 67.32,134.45) --
	( 96.09,134.45) --
	(124.87,134.45) --
	(153.64,134.45) --
	(182.41,134.45) --
	(211.19,134.45) --
	(239.96,134.45) --
	(268.74,134.45);

\draw[color=drawColor,dash pattern=on 1pt off 3pt ,line cap=round,line join=round,fill opacity=0.00,] ( 67.32, 73.40) --
	( 96.09, 73.40) --
	(124.87, 73.40) --
	(153.64, 73.40) --
	(182.41, 73.40) --
	(211.19, 73.40) --
	(239.96, 73.40) --
	(268.74, 73.40);

\draw[color=drawColor,line width= 0.8pt,line cap=round,line join=round,fill opacity=0.00,] ( 67.32,124.42) --
	( 96.09,119.22) --
	(124.87, 89.16) --
	(153.64,134.54) --
	(182.41,135.46) --
	(211.19,137.27) --
	(239.96,131.95) --
	(268.74, 89.65);

\draw[color=drawColor,line width= 0.8pt,line cap=round,line join=round,fill opacity=0.00,] ( 65.07,122.17) rectangle ( 69.57,126.67);

\draw[color=drawColor,line width= 0.8pt,line cap=round,line join=round,fill opacity=0.00,] ( 93.84,116.97) rectangle ( 98.34,121.47);

\draw[color=drawColor,line width= 0.8pt,line cap=round,line join=round,fill opacity=0.00,] (122.62, 86.91) rectangle (127.12, 91.41);

\draw[color=drawColor,line width= 0.8pt,line cap=round,line join=round,fill opacity=0.00,] (151.39,132.29) rectangle (155.89,136.79);

\draw[color=drawColor,line width= 0.8pt,line cap=round,line join=round,fill opacity=0.00,] (180.16,133.21) rectangle (184.66,137.71);

\draw[color=drawColor,line width= 0.8pt,line cap=round,line join=round,fill opacity=0.00,] (208.94,135.02) rectangle (213.44,139.52);

\draw[color=drawColor,line width= 0.8pt,line cap=round,line join=round,fill opacity=0.00,] (237.71,129.70) rectangle (242.21,134.20);

\draw[color=drawColor,line width= 0.8pt,line cap=round,line join=round,fill opacity=0.00,] (266.49, 87.40) rectangle (270.99, 91.90);
\end{scope}
\begin{scope}
\path[clip] (  0.00,  0.00) rectangle (307.15,180.67);
\definecolor[named]{drawColor}{rgb}{0.00,0.00,0.00}

\draw[color=drawColor,line cap=round,line join=round,fill opacity=0.00,] ( 67.32, 59.26) -- (268.74, 59.26);

\draw[color=drawColor,line cap=round,line join=round,fill opacity=0.00,] ( 67.32, 59.26) -- ( 67.32, 53.26);

\draw[color=drawColor,line cap=round,line join=round,fill opacity=0.00,] ( 96.09, 59.26) -- ( 96.09, 53.26);

\draw[color=drawColor,line cap=round,line join=round,fill opacity=0.00,] (124.87, 59.26) -- (124.87, 53.26);

\draw[color=drawColor,line cap=round,line join=round,fill opacity=0.00,] (153.64, 59.26) -- (153.64, 53.26);

\draw[color=drawColor,line cap=round,line join=round,fill opacity=0.00,] (182.41, 59.26) -- (182.41, 53.26);

\draw[color=drawColor,line cap=round,line join=round,fill opacity=0.00,] (211.19, 59.26) -- (211.19, 53.26);

\draw[color=drawColor,line cap=round,line join=round,fill opacity=0.00,] (239.96, 59.26) -- (239.96, 53.26);

\draw[color=drawColor,line cap=round,line join=round,fill opacity=0.00,] (268.74, 59.26) -- (268.74, 53.26);

\node[color=drawColor,anchor=base,inner sep=0pt, outer sep=0pt, scale=  1.00] at ( 67.32, 35.26) {$2$};

\node[color=drawColor,anchor=base,inner sep=0pt, outer sep=0pt, scale=  1.00] at ( 96.09, 35.26) {$4$};

\node[color=drawColor,anchor=base,inner sep=0pt, outer sep=0pt, scale=  1.00] at (124.87, 35.26) {$8$};

\node[color=drawColor,anchor=base,inner sep=0pt, outer sep=0pt, scale=  1.00] at (153.64, 35.26) {$16$};

\node[color=drawColor,anchor=base,inner sep=0pt, outer sep=0pt, scale=  1.00] at (182.41, 35.26) {$32$};

\node[color=drawColor,anchor=base,inner sep=0pt, outer sep=0pt, scale=  1.00] at (211.19, 35.26) {$64$};

\node[color=drawColor,anchor=base,inner sep=0pt, outer sep=0pt, scale=  1.00] at (239.96, 35.26) {$128$};

\node[color=drawColor,anchor=base,inner sep=0pt, outer sep=0pt, scale=  1.00] at (268.74, 35.26) {$256$};

\draw[color=drawColor,line cap=round,line join=round,fill opacity=0.00,] ( 59.26, 73.40) -- ( 59.26,164.97);

\draw[color=drawColor,line cap=round,line join=round,fill opacity=0.00,] ( 59.26, 73.40) -- ( 53.26, 73.40);

\draw[color=drawColor,line cap=round,line join=round,fill opacity=0.00,] ( 59.26,103.92) -- ( 53.26,103.92);

\draw[color=drawColor,line cap=round,line join=round,fill opacity=0.00,] ( 59.26,134.45) -- ( 53.26,134.45);

\draw[color=drawColor,line cap=round,line join=round,fill opacity=0.00,] ( 59.26,164.97) -- ( 53.26,164.97);

\node[rotate= 90.00,color=drawColor,anchor=base,inner sep=0pt, outer sep=0pt, scale=  1.00] at ( 47.26, 73.40) {$1$};

\node[rotate= 90.00,color=drawColor,anchor=base,inner sep=0pt, outer sep=0pt, scale=  1.00] at ( 47.26,103.92) {$1.5$};

\node[rotate= 90.00,color=drawColor,anchor=base,inner sep=0pt, outer sep=0pt, scale=  1.00] at ( 47.26,134.45) {$2$};

\node[rotate= 90.00,color=drawColor,anchor=base,inner sep=0pt, outer sep=0pt, scale=  1.00] at ( 47.26,164.97) {$2.5$};

\draw[color=drawColor,line cap=round,line join=round,fill opacity=0.00,] ( 59.26, 59.26) --
	(276.79, 59.26) --
	(276.79,166.22) --
	( 59.26,166.22) --
	( 59.26, 59.26);
\end{scope}
\begin{scope}
\path[clip] (  0.00,  0.00) rectangle (307.15,180.67);
\definecolor[named]{drawColor}{rgb}{0.00,0.00,0.00}

\node[color=drawColor,anchor=base,inner sep=0pt, outer sep=0pt, scale=  1.00] at (168.03, 11.26) {Number of processes $p$};

\node[rotate= 90.00,color=drawColor,anchor=base,inner sep=0pt, outer sep=0pt, scale=  1.00] at ( 23.26,112.74) {runtime($\frac{p}{2}$)/runtime($p$)};
\end{scope}
		\end{scope}
	\end{tikzpicture}
	}
	\subfigure{
	\begin{tikzpicture}[yscale=.8,xscale=.525]
		\begin{scope}[x=1pt,y=1pt]
\definecolor[named]{drawColor}{rgb}{0.00,0.00,0.00}
\definecolor[named]{fillColor}{rgb}{1.00,1.00,1.00}
\fill[color=fillColor,fill opacity=0.00,] (0,0) rectangle (307.15,180.67);
\begin{scope}
\path[clip] ( 59.26, 59.26) rectangle (276.79,166.22);
\definecolor[named]{fillColor}{rgb}{0.00,0.00,0.00}
\definecolor[named]{drawColor}{rgb}{0.00,0.00,0.00}

\draw[color=drawColor,dash pattern=on 1pt off 3pt ,line cap=round,line join=round,fill opacity=0.00,] ( 67.32,136.97) --
	( 96.09,136.97) --
	(124.87,136.97) --
	(153.64,136.97) --
	(182.41,136.97) --
	(211.19,136.97) --
	(239.96,136.97) --
	(268.74,136.97);

\draw[color=drawColor,dash pattern=on 1pt off 3pt ,line cap=round,line join=round,fill opacity=0.00,] ( 67.32, 73.76) --
	( 96.09, 73.76) --
	(124.87, 73.76) --
	(153.64, 73.76) --
	(182.41, 73.76) --
	(211.19, 73.76) --
	(239.96, 73.76) --
	(268.74, 73.76);

\draw[color=drawColor,line width= 0.8pt,line cap=round,line join=round,fill opacity=0.00,] ( 67.32,130.68) --
	( 96.09,117.03) --
	(124.87, 91.11) --
	(153.64,136.64) --
	(182.41,136.91) --
	(211.19,136.62) --
	(239.96,135.00) --
	(268.74,127.10);

\draw[color=drawColor,line width= 0.8pt,line cap=round,line join=round,fill opacity=0.00,] ( 65.07,128.43) rectangle ( 69.57,132.93);

\draw[color=drawColor,line width= 0.8pt,line cap=round,line join=round,fill opacity=0.00,] ( 93.84,114.78) rectangle ( 98.34,119.28);

\draw[color=drawColor,line width= 0.8pt,line cap=round,line join=round,fill opacity=0.00,] (122.62, 88.86) rectangle (127.12, 93.36);

\draw[color=drawColor,line width= 0.8pt,line cap=round,line join=round,fill opacity=0.00,] (151.39,134.39) rectangle (155.89,138.89);

\draw[color=drawColor,line width= 0.8pt,line cap=round,line join=round,fill opacity=0.00,] (180.16,134.66) rectangle (184.66,139.16);

\draw[color=drawColor,line width= 0.8pt,line cap=round,line join=round,fill opacity=0.00,] (208.94,134.37) rectangle (213.44,138.87);

\draw[color=drawColor,line width= 0.8pt,line cap=round,line join=round,fill opacity=0.00,] (237.71,132.75) rectangle (242.21,137.25);

\draw[color=drawColor,line width= 0.8pt,line cap=round,line join=round,fill opacity=0.00,] (266.49,124.85) rectangle (270.99,129.35);
\end{scope}
\begin{scope}
\path[clip] (  0.00,  0.00) rectangle (307.15,180.67);
\definecolor[named]{fillColor}{rgb}{0.00,0.00,0.00}
\definecolor[named]{drawColor}{rgb}{0.00,0.00,0.00}

\draw[color=drawColor,line cap=round,line join=round,fill opacity=0.00,] ( 67.32, 59.26) -- (268.74, 59.26);

\draw[color=drawColor,line cap=round,line join=round,fill opacity=0.00,] ( 67.32, 59.26) -- ( 67.32, 53.26);

\draw[color=drawColor,line cap=round,line join=round,fill opacity=0.00,] ( 96.09, 59.26) -- ( 96.09, 53.26);

\draw[color=drawColor,line cap=round,line join=round,fill opacity=0.00,] (124.87, 59.26) -- (124.87, 53.26);

\draw[color=drawColor,line cap=round,line join=round,fill opacity=0.00,] (153.64, 59.26) -- (153.64, 53.26);

\draw[color=drawColor,line cap=round,line join=round,fill opacity=0.00,] (182.41, 59.26) -- (182.41, 53.26);

\draw[color=drawColor,line cap=round,line join=round,fill opacity=0.00,] (211.19, 59.26) -- (211.19, 53.26);

\draw[color=drawColor,line cap=round,line join=round,fill opacity=0.00,] (239.96, 59.26) -- (239.96, 53.26);

\draw[color=drawColor,line cap=round,line join=round,fill opacity=0.00,] (268.74, 59.26) -- (268.74, 53.26);

\node[color=drawColor,anchor=base,inner sep=0pt, outer sep=0pt, scale=  1.00] at ( 67.32, 35.26) {$2$};

\node[color=drawColor,anchor=base,inner sep=0pt, outer sep=0pt, scale=  1.00] at ( 96.09, 35.26) {$4$};

\node[color=drawColor,anchor=base,inner sep=0pt, outer sep=0pt, scale=  1.00] at (124.87, 35.26) {$8$};

\node[color=drawColor,anchor=base,inner sep=0pt, outer sep=0pt, scale=  1.00] at (153.64, 35.26) {$16$};

\node[color=drawColor,anchor=base,inner sep=0pt, outer sep=0pt, scale=  1.00] at (182.41, 35.26) {$32$};

\node[color=drawColor,anchor=base,inner sep=0pt, outer sep=0pt, scale=  1.00] at (211.19, 35.26) {$64$};

\node[color=drawColor,anchor=base,inner sep=0pt, outer sep=0pt, scale=  1.00] at (239.96, 35.26) {$128$};

\node[color=drawColor,anchor=base,inner sep=0pt, outer sep=0pt, scale=  1.00] at (268.74, 35.26) {$256$};

\draw[color=drawColor,line cap=round,line join=round,fill opacity=0.00,] ( 59.26, 73.76) -- ( 59.26,136.97);

\draw[color=drawColor,line cap=round,line join=round,fill opacity=0.00,] ( 59.26, 73.76) -- ( 53.26, 73.76);

\draw[color=drawColor,line cap=round,line join=round,fill opacity=0.00,] ( 59.26,105.37) -- ( 53.26,105.37);

\draw[color=drawColor,line cap=round,line join=round,fill opacity=0.00,] ( 59.26,136.97) -- ( 53.26,136.97);

\node[rotate= 90.00,color=drawColor,anchor=base,inner sep=0pt, outer sep=0pt, scale=  1.00] at ( 47.26, 73.76) {$1$};

\node[rotate= 90.00,color=drawColor,anchor=base,inner sep=0pt, outer sep=0pt, scale=  1.00] at ( 47.26,105.37) {$1.5$};

\node[rotate= 90.00,color=drawColor,anchor=base,inner sep=0pt, outer sep=0pt, scale=  1.00] at ( 47.26,136.97) {$2$};

\draw[color=drawColor,line cap=round,line join=round,fill opacity=0.00,] ( 59.26, 59.26) --
	(276.79, 59.26) --
	(276.79,166.22) --
	( 59.26,166.22) --
	( 59.26, 59.26);
\end{scope}
\begin{scope}
\path[clip] (  0.00,  0.00) rectangle (307.15,180.67);
\definecolor[named]{fillColor}{rgb}{0.00,0.00,0.00}
\definecolor[named]{drawColor}{rgb}{0.00,0.00,0.00}

\node[color=drawColor,anchor=base,inner sep=0pt, outer sep=0pt, scale=  1.00] at (168.03, 11.26) {Number of processes $p$};

\end{scope}
		\end{scope}
	\end{tikzpicture}
	}

	\caption{Scaling results of the parallel local max algorithm on random geometric graphs with random
	edge weights. Left: rgg23 ($\approx $63 million edges). Right: rgg24 ($\approx$ 132 million edges).}
\label{fig:rggdistributed}
\end{center}
\end{figure}
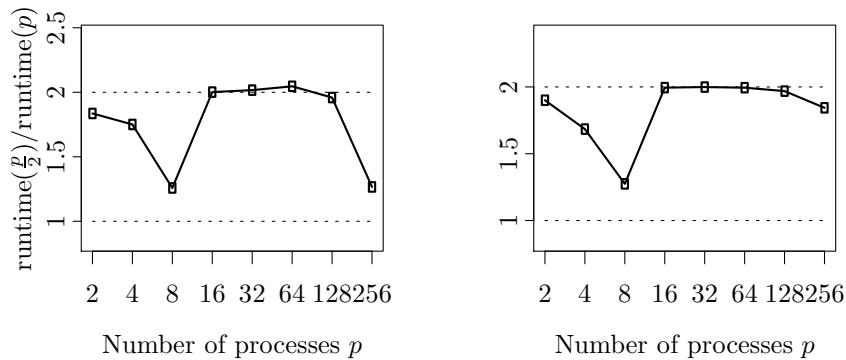

Our distributed memory parallelization (using MPI) on $p$ processing elements
(PEs or MPI processes) assigns nodes to PEs and stores all edges incident to a
node locally. This can be done in a load balanced way if no node has degree
exceeding $m/p$.  The second pass of the basic algorithm from
Section~\ref{s:parallel} has to exchange information on candidate edges that
cross a PE boundary. In the worst case, this can involve all edges handled by a
PE, i.e., we can expect better performance if we manage to keep most edges
locally. In our experiments, one PE owns nodes whose numbers are a consecutive
range of the input numbers. Thus, depending on how much locality the input
numbering contains we have a highly local or a highly non-local situation.  We
have not considered more sophisticated ways of node assignment so far since our
motivating application is graph partitioning/clustering where almost by
definition we initially do \emph{not} know which nodes form clusters -- this is
the intended \emph{output}. Since Lemma~\ref{lem:remove} also applies to the
subgraph relevant for a particular PE, we can expect that the graph shrinks
fairly uniformly over the entire network.

We performed experiments on two different clusters at the KIT computing center
both using compute-nodes with two quad-core processors each. Refer to \cite{Birn12} for details.
We ran experiments with up 128 compute-nodes corresponding to
1024 cores with one MPI process per core.

Figure~\ref{fig:rggdistributed} illustrates how our distributed local max
implementation scales for the random geometric graphs \Id{rgg23} and \Id{rgg24}
(using random edge weights) which have fairly good locality. We plot the decrease in
running time for successive doubling of $p$, i.e., a value of two stands
for perfect relative speedup for this step and a value below one means that
parallelization no longer helps. We see values slightly below two for the steps
$1\rightarrow 2$ and $2\rightarrow 4$ which is typical behavior of multicore
algorithms when cores compete for resources like memory bandwidth. For $p=8$ we
start to use two compute-nodes (with 4 active cores each) and consequently we see
the largest dip in efficiency. Beyond that, we have almost perfect scaling until
the problem instance becomes too small. We have similar behavior for other
graphs with good locality. For graphs with poor locality, efficiency is not very
good. However the ratios stay above one for a very long time, i.e., it pays to
use parallelism when it is available anyway. This is the situation we have when
partitioning large graphs for use on massively parallel machines. Considering
that the matching step in graph partitioning is often the least work intensive
one in multi-level graph partitioning algorithms we conclude that local max
might be a way to remove a sequential bottleneck from massively parallel graph
partitioning. Refer to \cite{Birn12} for additional data.

\subsection{GPU Implementation}\label{s:gpu}
\begin{figure}
\begin{center}
\includegraphics[scale=0.7]{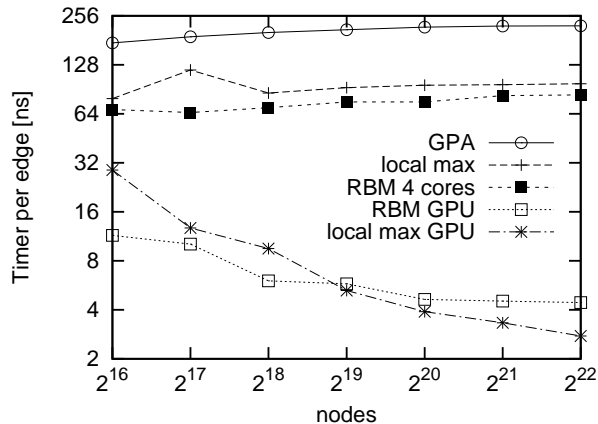}
\caption{\label{fig:delaunaygpu}Running time of sequential and GPU algorithms for Delaunay instances.}
\end{center}
\end{figure}

Our GPU algorithm is a fairly direct implementation of the CRCW algorithm.
We reduce the algorithm to the basic primitives such as segmented prefix sum, prefix sum and random gather/scatter from/to GPU memory.
As a basis for our implementation we use back40computing library by Merrill \cite{back40computing}.

Figure~\ref{fig:delaunaygpu} compares the running time of our implementation
with GPA, sequential local max, the RBM algorithm parallelized for 4 cores, and
its GPU parallelization from \cite{FagBis12}. While the CPU implementation has
troubles recovering from its sequential inefficiency and is only slightly faster
than even sequential local max, the GPU implementation is impressively fast in
particular for small graphs. For large graphs, the GPU implementation of local
max is faster.  Since local max has better solution quality, we consider this a
good result.  Our GPU code is up to 35 times faster than sequential local max.
We may also be able to learn from the implementation techniques of RBM GPU for
small inputs in future work.

For random geometric graphs and random graphs, we get similar behavior (see
Figure~\ref{fig:rgggpu}~and~\ref{fig:randomgpu} in Appendix~\ref{app:moredata}).
The results for rgg are slightly worse for GPU local max -- speedup is up to 24 over sequential local max
and a speed advantage over GPU RBM only for the very largest inputs.
As for random graphs, the denser the graph is the larger is our speedup over the sequential and GPU RBM implementations.
Thus, for $\alpha=64$ our implementation is faster than GPU RBM for $n=2^{15}$ already.
While for $n=2^{18}$ it is $65\%$ faster than GPU RBM and 30 times faster than the sequential local max.

\section{Conclusions}

The local max algorithm is a good choice for parallel or external computation of
maximal and approximate maximum weight matchings. On the theoretical side it is
provably efficient for computing maximal matchings and guarantees a
1/2-approximation. On the practical side it yields better quality at faster
speed than several competitors including the greedy algorithm and RBM. Somewhat
surprisingly it is even attractive as a sequential algorithm, outperforming HEM
with respect to solution quality and other algorithms with respect to speed.

Many interesting question remain. Can we omit re-randomization of edge
weights when computing maximal matchings? Is there a linear work parallel
algorithm with polylogarithmic execution time that computes 1/2-approximations
(or any other constant factor approximation). Can we even do 2/3-approximations
with linear work in parallel \cite{DH03c,PS04b}?

\bibliographystyle{plain}
\bibliography{diss}
\clearpage
\begin{appendix}
\section{More Experimental results}\label{app:moredata}
\begin{figure}[h]
\begin{center}
\includegraphics[scale=0.8]{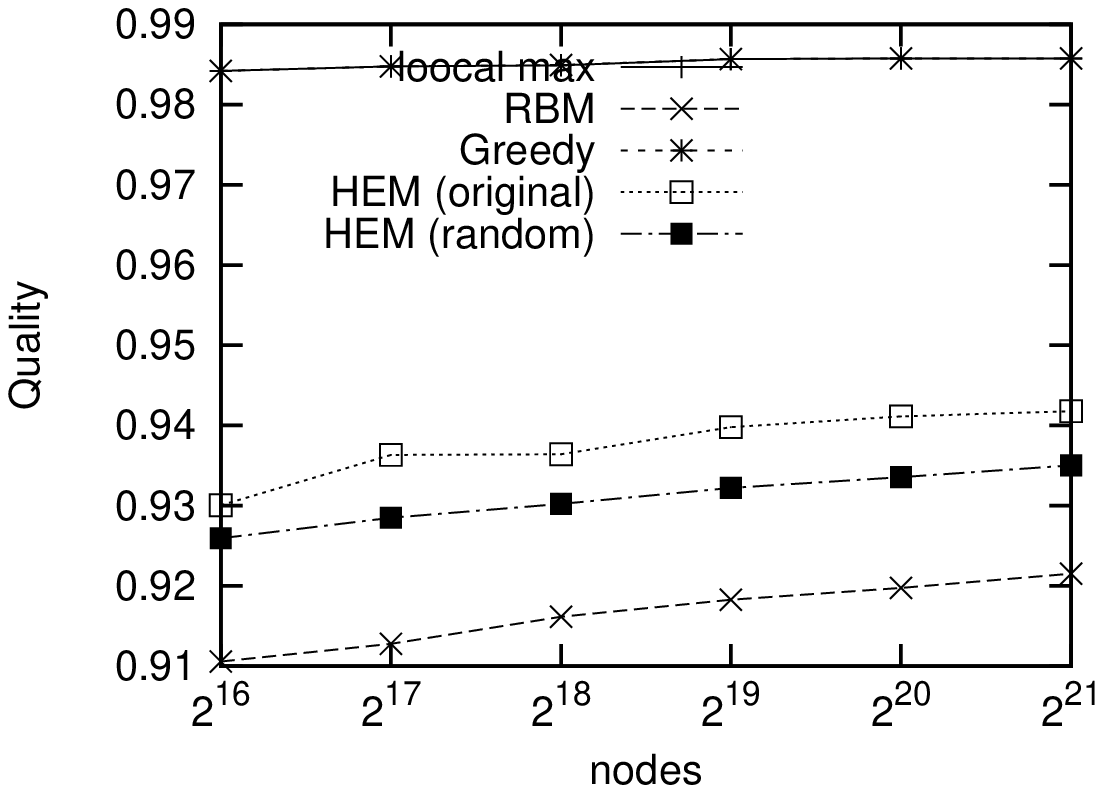}
\includegraphics[scale=0.8]{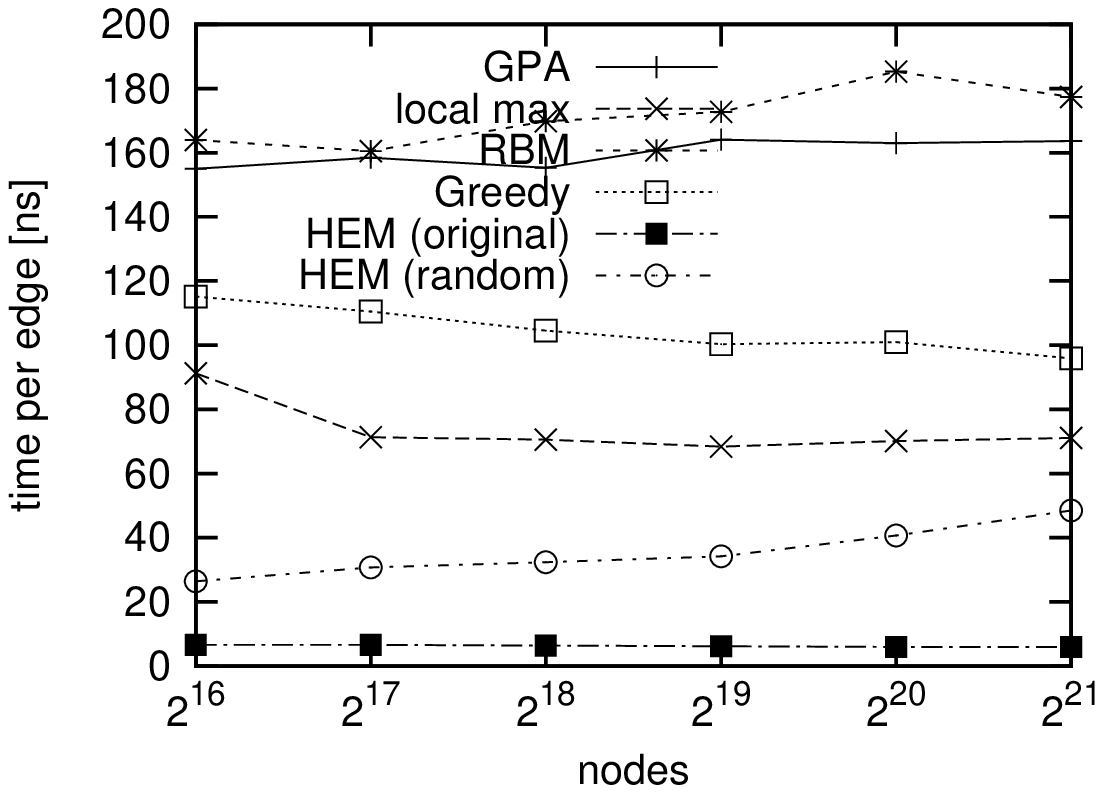}
\caption{\label{fig:rggquality}Ratio of the weights computed by GPA and other algorithms for random geometric graphs and running time.}
\end{center}
\end{figure}
\begin{figure}
\begin{center}
\includegraphics[scale=0.5]{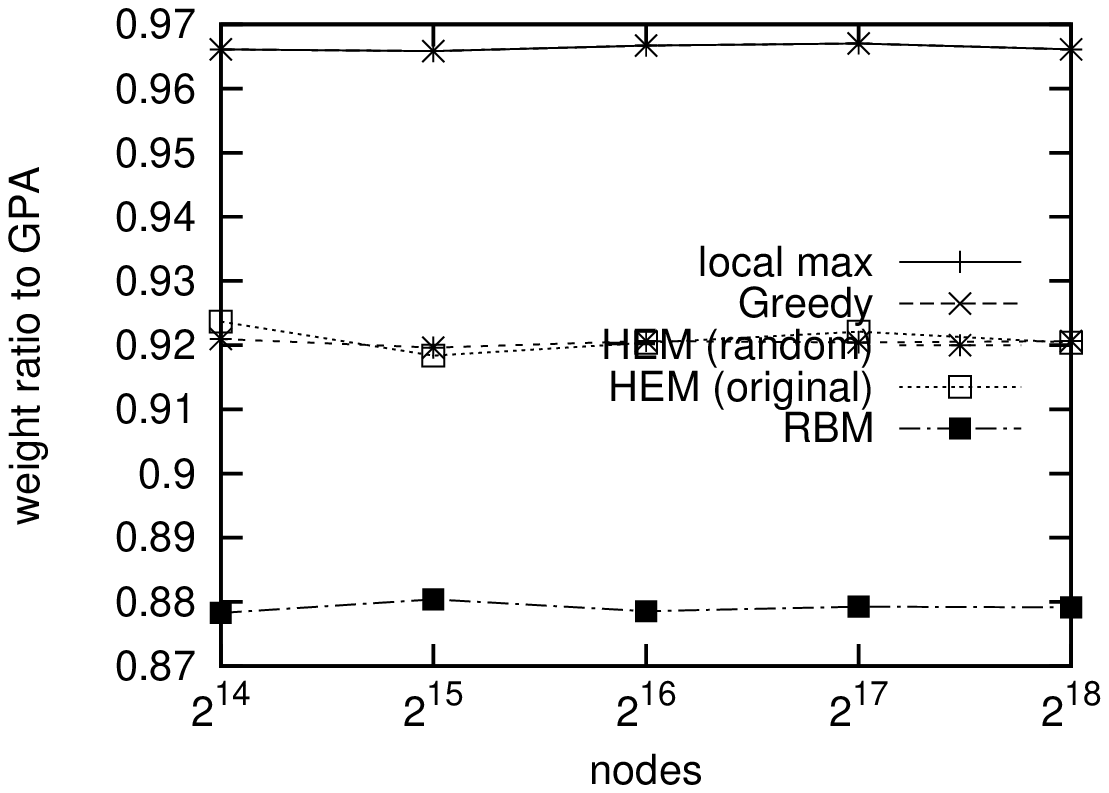}~\includegraphics[scale=0.5]{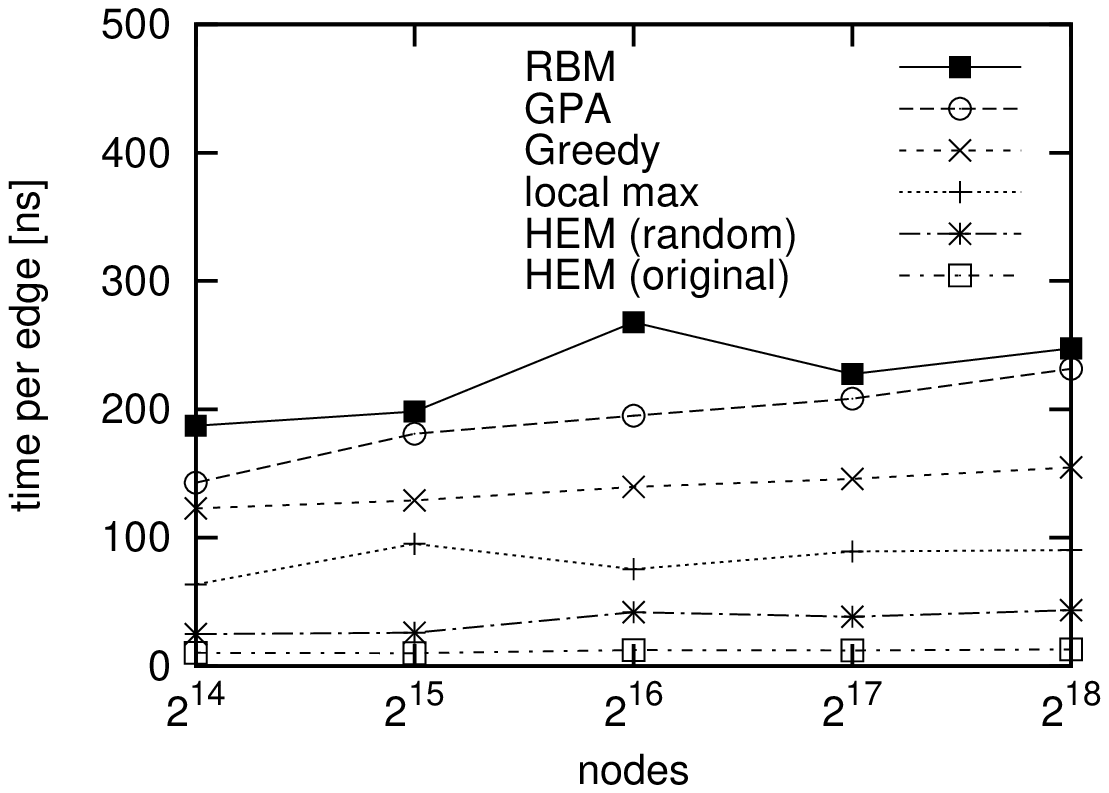}
\caption{\label{fig:random4}Ratio of the weights computed by GPA and other sequential algorithms (left) and their timing (right) for random graphs with $\alpha=4$.}
\end{center}
\end{figure}
\begin{figure}
\begin{center}
\includegraphics[scale=0.5]{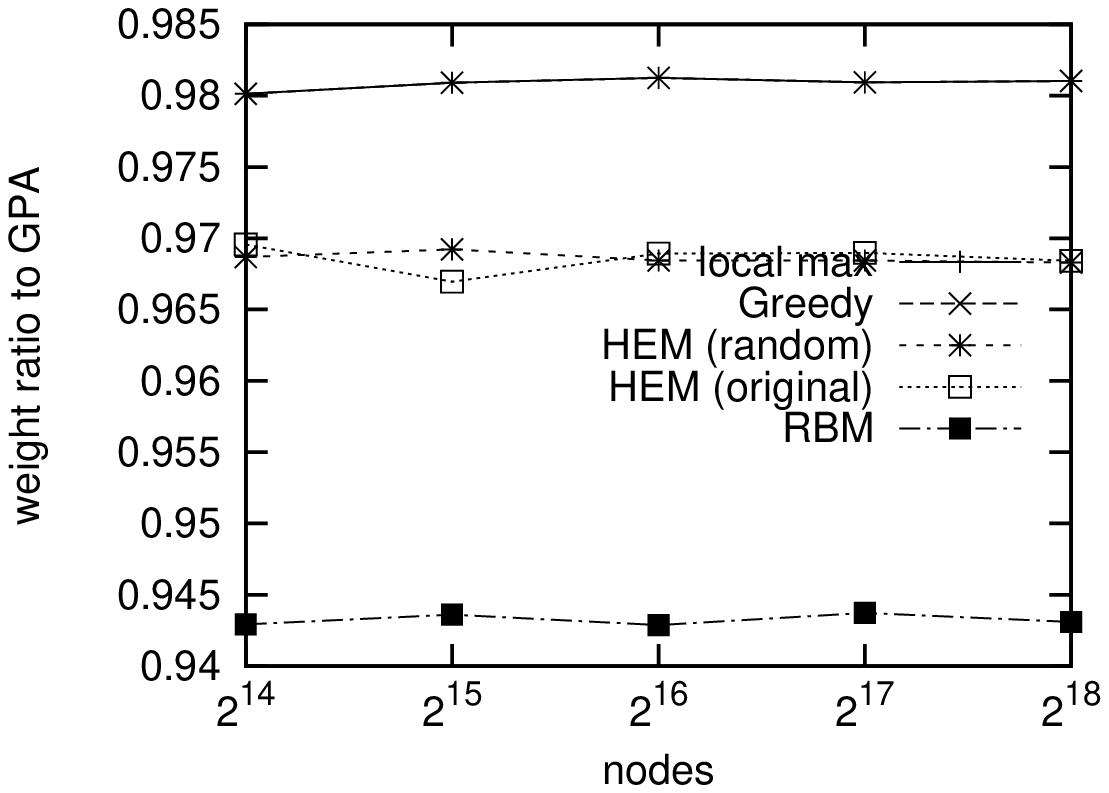}~\includegraphics[scale=0.5]{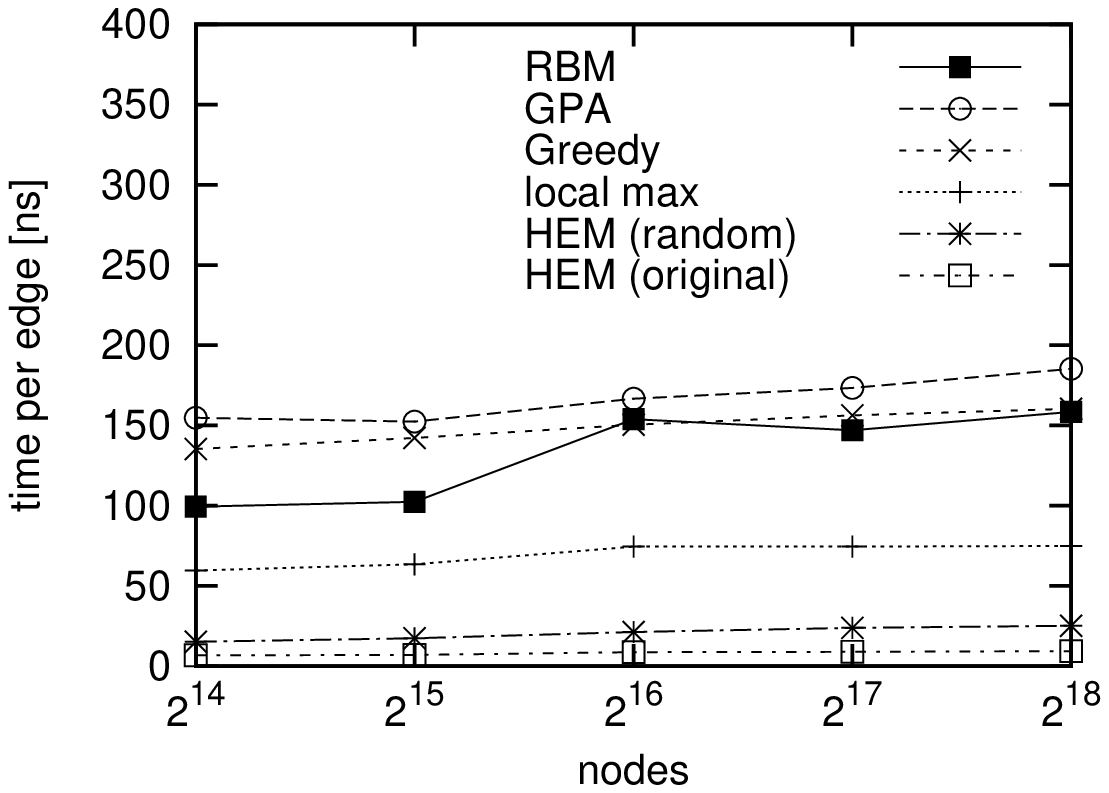}
\caption{\label{fig:random16}Ratio of the weights computed by GPA and other sequential algorithms (left) and their timing (right) for random graphs with $\alpha=16$.}
\end{center}
\end{figure}
\begin{figure}
\begin{center}
\includegraphics[scale=0.5]{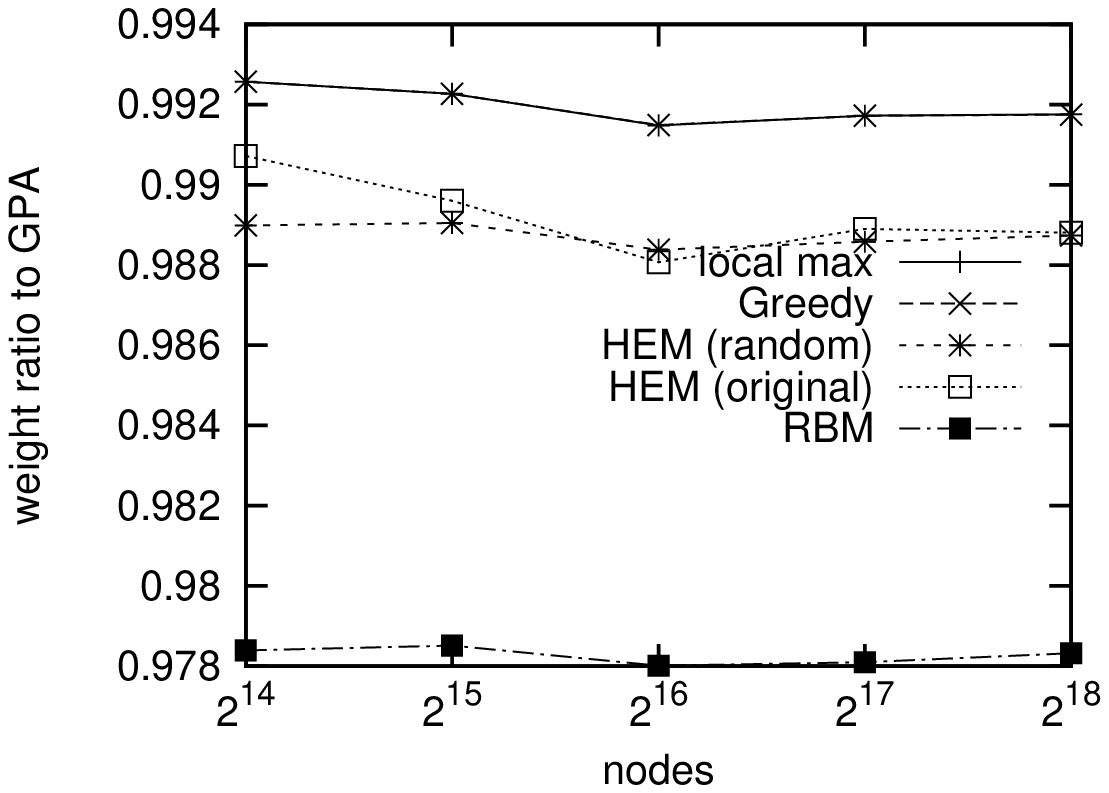}~\includegraphics[scale=0.5]{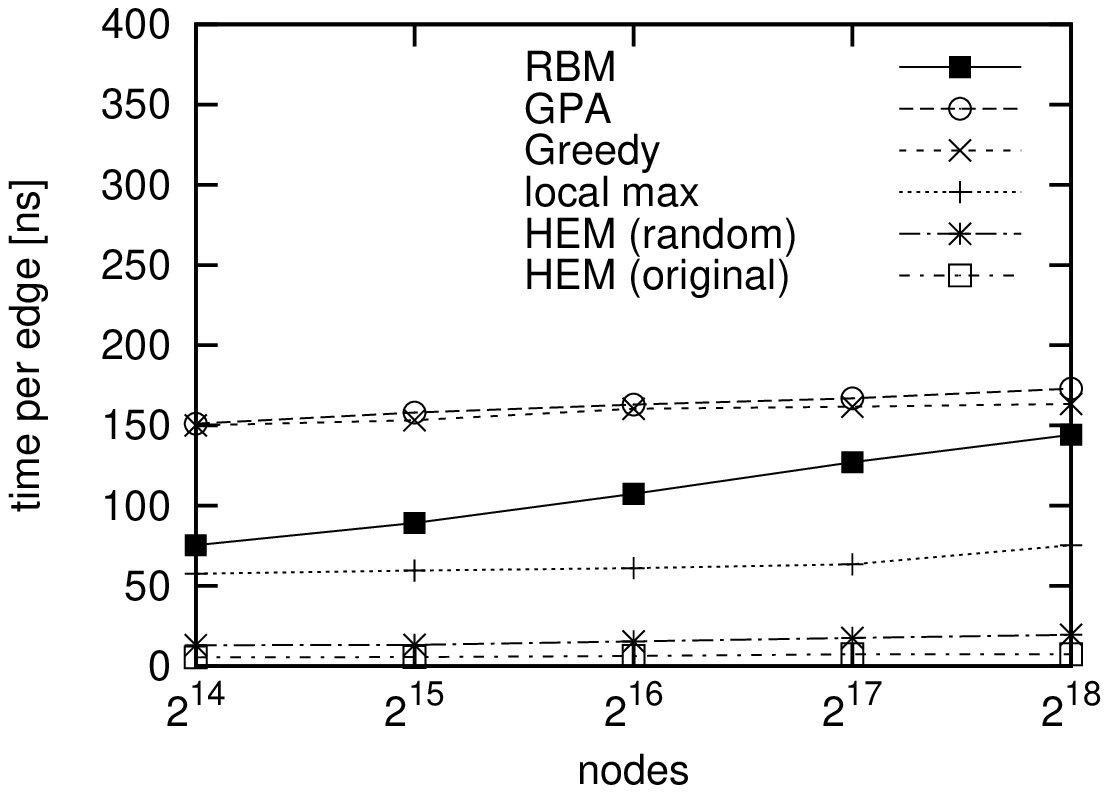}
\caption{\label{fig:random64}Ratio of the weights computed by GPA and other sequential algorithms (left) and their timing (right) for random graphs with $\alpha=64$.}
\end{center}
\end{figure}
\begin{figure}
\begin{center}
\includegraphics[scale=0.8]{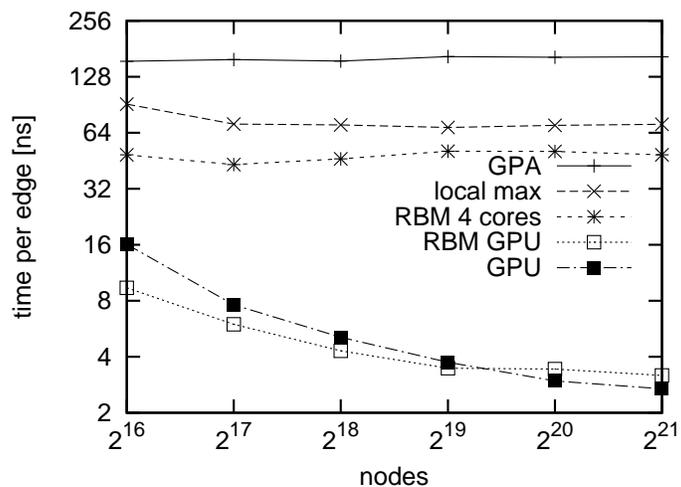}
\caption{\label{fig:rgggpu}Running time of sequential and GPU algorithms for random geometric graphs instances.}
\end{center}
\end{figure}
\begin{figure}
\begin{center}
\includegraphics[scale=0.5]{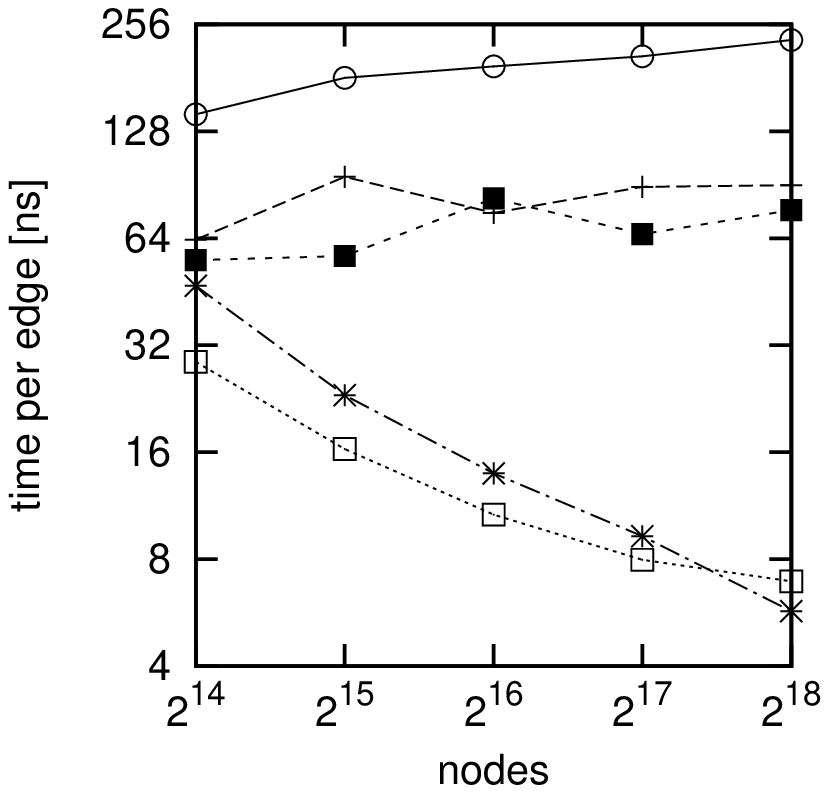}~\includegraphics[scale=0.5]{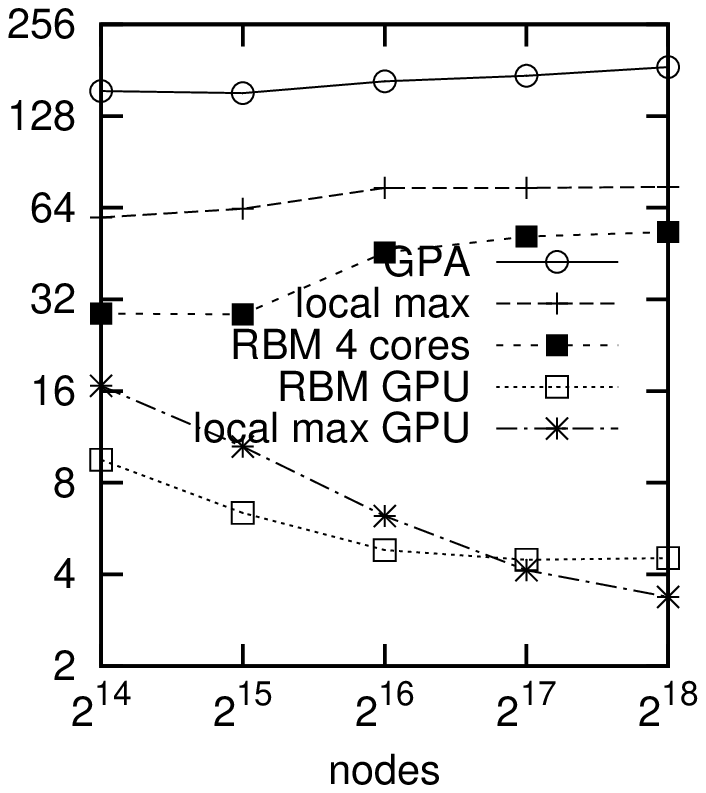}~\includegraphics[scale=0.5]{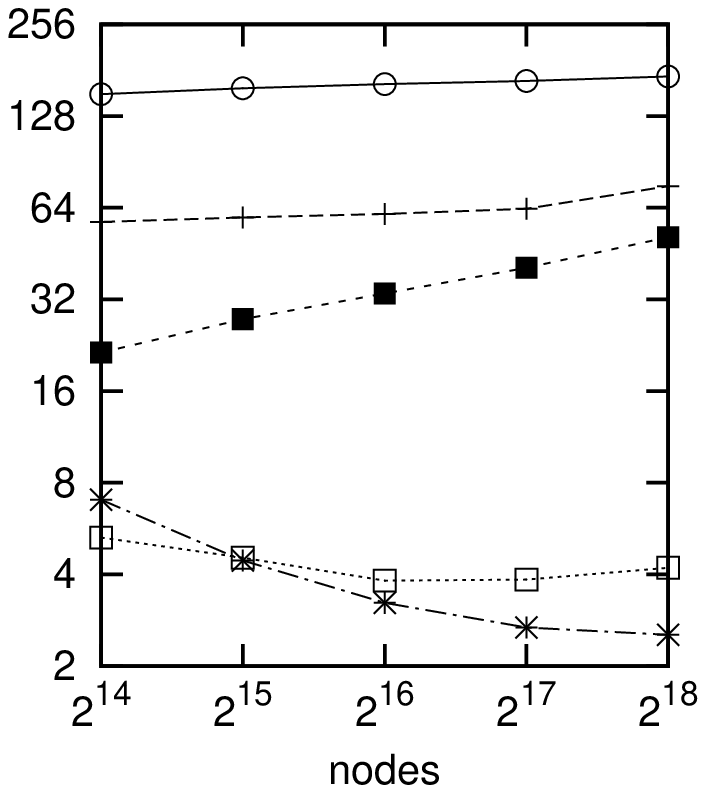}
\caption{\label{fig:randomgpu}Running time of sequential and GPU algorithms for random graphs with $\alpha=4,16,64$ (from left to right).}
\end{center}
\end{figure}
\section{List of Instances}\label{app:instances}
\begin{tabular}{|l|r|r|}
  \hline
file&n&m\\
  \hline
2cubes\_sphere.mtx.graph&101492&772886\\
af\_0\_k101.mtx.graph&503625&8523525\\
af\_1\_k101.mtx.graph&503625&8523525\\
af\_2\_k101.mtx.graph&503625&8523525\\
af\_3\_k101.mtx.graph&503625&8523525\\
af\_4\_k101.mtx.graph&503625&8523525\\
af\_5\_k101.mtx.graph&503625&8523525\\
af\_shell1.mtx.graph&504855&8542010\\
af\_shell2.mtx.graph&504855&8542010\\
af\_shell3.mtx.graph&504855&8542010\\
af\_shell4.mtx.graph&504855&8542010\\
af\_shell5.mtx.graph&504855&8542010\\
af\_shell6.mtx.graph&504855&8542010\\
af\_shell7.mtx.graph&504855&8542010\\
af\_shell8.mtx.graph&504855&8542010\\
af\_shell9.mtx.graph&504855&8542010\\
apache2.mtx.graph&715176&2051347\\
BenElechi1.mtx.graph&245874&6452311\\
bmw3\_2.mtx.graph&227362&5530634\\
bmw7st\_1.mtx.graph&141347&3599160\\
bmwcra\_1.mtx.graph&148770&5247616\\
boneS01.mtx.graph&127224&3293964\\
boyd1.mtx.graph&93279&558985\\
c-73.mtx.graph&169422&554926\\
c-73b.mtx.graph&169422&554926\\
c-big.mtx.graph&345241&997885\\
cant.mtx.graph&62451&1972466\\
case39.mtx.graph&40216&516021\\
case39\_A\_01.mtx.graph&40216&516021\\
case39\_A\_02.mtx.graph&40216&516026\\
case39\_A\_03.mtx.graph&40216&516026\\
case39\_A\_04.mtx.graph&40216&516026\\
case39\_A\_05.mtx.graph&40216&516026\\
case39\_A\_06.mtx.graph&40216&516026\\
case39\_A\_07.mtx.graph&40216&516026\\
case39\_A\_08.mtx.graph&40216&516026\\
case39\_A\_09.mtx.graph&40216&516026\\
case39\_A\_10.mtx.graph&40216&516026\\
case39\_A\_11.mtx.graph&40216&516026\\
case39\_A\_12.mtx.graph&40216&516026\\
case39\_A\_13.mtx.graph&40216&516026\\
cfd1.mtx.graph&70656&878854\\
cfd2.mtx.graph&123440&1482229\\
CO.mtx.graph&221119&3722469\\
consph.mtx.graph&83334&2963573\\
cop20k\_A.mtx.graph&99843&1262244\\
crankseg\_1.mtx.graph&52804&5280703\\
crankseg\_2.mtx.graph&63838&7042510\\
ct20stif.mtx.graph&52329&1323067\\
  \hline
\end{tabular}~
\begin{tabular}{|l|r|r|}
  \hline
file&n&m\\
  \hline
darcy003.mtx.graph&389874&933557\\
dawson5.mtx.graph&51537&479620\\
denormal.mtx.graph&89400&533412\\
dielFilterV2clx.mtx.graph&607232&12351020\\
dielFilterV3clx.mtx.graph&420408&16232900\\
Dubcova2.mtx.graph&65025&482600\\
Dubcova3.mtx.graph&146689&1744980\\
d\_pretok.mtx.graph&182730&756256\\
ecology1.mtx.graph&1000000&1998000\\
ecology2.mtx.graph&999999&1997996\\
engine.mtx.graph&143571&2281251\\
F1.mtx.graph&343791&13246661\\
F2.mtx.graph&71505&2611390\\
Fault\_639.mtx.graph&638802&13987881\\
filter3D.mtx.graph&106437&1300371\\
G3\_circuit.mtx.graph&1585478&3037674\\
Ga10As10H30.mtx.graph&113081&3001276\\
Ga19As19H42.mtx.graph&133123&4375858\\
Ga3As3H12.mtx.graph&61349&2954799\\
Ga41As41H72.mtx.graph&268096&9110190\\
GaAsH6.mtx.graph&61349&1660230\\
gas\_sensor.mtx.graph&66917&818224\\
Ge87H76.mtx.graph&112985&3889605\\
Ge99H100.mtx.graph&112985&4169205\\
gsm\_106857.mtx.graph&589446&10584739\\
H2O.mtx.graph&67024&1074856\\
helm2d03.mtx.graph&392257&1174839\\
hood.mtx.graph&220542&5273947\\
IG5-17.mtx.graph&30162&1034600\\
invextr1\_new.mtx.graph&30412&906915\\
kkt\_power.mtx.graph&2063494&6482320\\
Lin.mtx.graph&256000&755200\\
mario002.mtx.graph&389874&933557\\
mixtank\_new.mtx.graph&29957&982542\\
mouse\_gene.mtx.graph&45101&14461095\\
msdoor.mtx.graph&415863&9912536\\
m\_t1.mtx.graph&97578&4827996\\
nasasrb.mtx.graph&54870&1311227\\
nd12k.mtx.graph&36000&7092473\\
nd24k.mtx.graph&72000&14321817\\
nlpkkt80.mtx.graph&1062400&13821136\\
offshore.mtx.graph&259789&1991442\\
oilpan.mtx.graph&73752&1761718\\
parabolic\_fem.mtx.graph&525825&1574400\\
pdb1HYS.mtx.graph&36417&2154174\\
pwtk.mtx.graph&217918&5708253\\
qa8fk.mtx.graph&66127&797226\\
qa8fm.mtx.graph&66127&797226\\
  \hline
\end{tabular}

\begin{tabular}{|l|r|r|}
  \hline
  file&n&m\\
  \hline
s3dkq4m2.mtx.graph&90449&2365221\\
s3dkt3m2.mtx.graph&90449&1831506\\
shipsec1.mtx.graph&140874&3836265\\
shipsec5.mtx.graph&179860&4966618\\
shipsec8.mtx.graph&114919&3269240\\
ship\_001.mtx.graph&34920&2304655\\
ship\_003.mtx.graph&121728&3982153\\
Si34H36.mtx.graph&97569&2529405\\
Si41Ge41H72.mtx.graph&185639&7412813\\
Si87H76.mtx.graph&240369&5210631\\
SiO.mtx.graph&33401&642127\\
SiO2.mtx.graph&155331&5564086\\
sparsine.mtx.graph&50000&749494\\
StocF-1465.mtx.graph&1465137&9770126\\
t3dh.mtx.graph&79171&2136467\\
t3dh\_a.mtx.graph&79171&2136467\\
thermal2.mtx.graph&1228045&3676134\\
thread.mtx.graph&29736&2220156\\
tmt\_sym.mtx.graph&726713&2177124\\
TSOPF\_FS\_b162\_c3.mtx.graph&30798&896688\\
TSOPF\_FS\_b162\_c4.mtx.graph&40798&1193898\\
TSOPF\_FS\_b300.mtx.graph&29214&2196173\\
TSOPF\_FS\_b300\_c1.mtx.graph&29214&2196173\\
TSOPF\_FS\_b300\_c2.mtx.graph&56814&4376395\\
TSOPF\_FS\_b39\_c19.mtx.graph&76216&979241\\
TSOPF\_FS\_b39\_c30.mtx.graph&120216&1545521\\
turon\_m.mtx.graph&189924&778531\\
vanbody.mtx.graph&47072&1144913\\
x104.mtx.graph&108384&5029620\\
  \hline
\end{tabular}

\end{appendix}
\end{document}